\newcommand{\argmin}{\operatornamewithlimits{argmin}}
\newcommand{\positivereal}{\ensuremath{\mathbb{R}^+}\xspace}
\newcommand{\nonnegativereal}{{{\ensuremath{[0,\infty)}}}}
\newcommand\p{\mbox{\bf P}\xspace}
\newcommand\np{\mbox{\bf NP}\xspace}
\newcommand\nph{\mbox{{\bf NP}{\rm -hard}}\xspace}
\newcommand\ugch{\mbox{{\bf UGC}{\rm -hard}}\xspace}
\def\poly{\operatorname{poly}}
\newcommand{\minmp}{\textsc{Min Middlebox Node Purchase}\xspace}
\newcommand{\maxmp}{\textsc{Budgeted Middlebox Node Purchase}\xspace}
\newcommand{\tempcolor}{black}
\newcommand{\Twalk}{\textcolor{\tempcolor}{2-Walk}\xspace}
\newcommand{\twalk}{\textcolor{\tempcolor}{2-walk}\xspace}
\newcommand{\twalks}{\textcolor{\tempcolor}{2-walks}\xspace}
\crefname{thm}{theorem}{theorems}
\crefname{lemma}{lemma}{lemmas}
\title{Multi-Commodity Flow with In-Network Processing$^\dagger$}
\author[1]{Moses Charikar}
\author[2]{Yonatan Naamad}
\author[3]{Jennifer Rexford}
\author[4]{X. Kelvin Zou}
\affil[1]{Stanford University, Stanford, CA, USA \\
  \texttt{moses@cs.stanford.edu}}
\affil[2]{Amazon.com, Inc., Palo Alto, USA.\thanks{This work was done while the author was at the Department of Computer Science, Princeton University.}\\
  \texttt{ynaamad@amazon.com}}
\affil[3]{Princeton University, Princeton, NJ, USA \\
  \texttt{jrex@cs.princeton.edu}}
\affil[4]{Google Inc., Mountain View, CA, USA \\
  \texttt{kelvinzou@google.com}}
\authorrunning{M. Charikar, Y. Naamad, J. Rexford, and X. K. Zou} 
\subjclass{F.2.2 Nonnumerical Algorithms and Problems}
\keywords{multicommodity flow, middleboxes, network function virtualization, approximation algorithms, hardness of approximation}
\begin{document}
\maketitle

\begin{abstract}
Modern networks run ``middleboxes'' that offer services ranging from network address translation and server load balancing to firewalls, encryption, and compression.  In an industry trend known as Network Functions Virtualization (NFV), these middleboxes run as virtual machines on any commodity server, and the switches steer traffic through the relevant chain of services.  Network administrators must decide how many middleboxes to run, where to place them, and how to direct traffic through them, based on the traffic load and the server and network capacity.  Rather than placing \emph{specific} kinds of middleboxes on each processing node, we argue that server virtualization allows each server node to host \emph{all} middlebox functions, and simply vary the fraction of resources devoted to each one.  This extra flexibility fundamentally changes the optimization problem the network administrators must solve to a new kind of multi-commodity flow problem, where the traffic flows consume bandwidth on the links as well as processing resources on the nodes.  We show that allocating resources to maximize the processed flow can be optimized exactly via a linear programming formulation, and to arbitrary accuracy via an efficient combinatorial algorithm.  Our experiments with real traffic and topologies show that a joint optimization of node and link resources leads to an efficient use of bandwidth and processing capacity.  We also study a class of design problems that decide where to provide node capacity to best process and route a given set of demands, and demonstrate both approximation algorithms and hardness results for these problems.
\end{abstract}

\renewcommand{\thefootnote}{\fnsymbol{footnote}}
\footnotetext[2]{ A much earlier version of this work is available online at \textcolor{blue}{\url{https://www.cs.princeton.edu/~jrex/papers/mopt14.pdf}}. The two works differ significantly in both content and presentation.}
  
\section{Introduction}
\label{sec:intro}
In addition to delivering data efficiently, modern networks often perform services on the traffic in flight to enhance security, privacy, or performance, or provide new features.  Network administrators often install ``middleboxes'' such as firewalls, network address translators, server load balancers, Web caches, video transcoders, and devices that compress or encrypt the traffic.  In fact, many networks have as many middleboxes as underlying routers or switches~\cite{APLOMB2012}.  Often a single connection must traverse multiple middleboxes, and different connections may go through different sequences of middleboxes.  For example, while Web traffic may go through a firewall followed by a server load balancer, video traffic may simply go through a transcoder.  To keep up with the traffic demands, an organization may run multiple instances of the same middlebox.  Deciding how many middleboxes to run, where to place them, and how to direct traffic through them, is a major challenge facing network administrators.

Until recently, each middlebox was a dedicated appliance, consisting of both software and hardware.  Administrators typically installed these appliances at critical locations that naturally see most of the traffic, such as the gateway connecting a campus or company to the rest of the Internet.  A network could easily have a long chain of these appliances at one location, forcing all connections to traverse every appliance---whether they need all of the services or not.  In addition, placing middleboxes only at the gateway does not serve the organization's many \emph{internal} connections, unless the internal traffic is routed circuitously through the gateway.
%

Over the last few years, middleboxes have become increasingly \emph{virtualized}, with the software service separate from the physical hardware---an industry trend called Network Functions Virtualization (NFV)~\cite{nfv,opnfv}.  The network can ``spin up'' (or down) virtual machines on any physical server, as needed.  This has led to a growing interest in good algorithms for optimizing the (i) \emph{allocation} of middleboxes over a pool of server resources, (ii) \emph{steering} of traffic through a suitable sequence of middleboxes based on a high-level policy, and (iii) \emph{routing} of the traffic between the servers over efficient network paths~\cite{SIMPLE2013,slick, jointoptvideo2015,NSDI2016,LiQian2016}.

Rather than solving these three optimization problems separately, we introduce---and solve---a joint optimization problem.  Since server resources are fungible, we argue that each processing node could subdivide its resources \emph{arbitrarily} across any of the middlebox functions, as needed.  That is, the \emph{allocation} problem is more naturally a question of what fraction of each node's computational (or memory) resources to allocate to each middlebox function. Similarly, each connection can have its middlebox processing performed on any node, or set of nodes, that have sufficient resources.  That is, the \emph{steering} problem is more naturally a question of deciding which nodes should devote a share of their processing resources to a particular portion of the traffic.  Hence, the joint optimization problem devolves to a new kind of \emph{routing} problem, where we compute paths based on both the bandwidth and processing requirements of the traffic between each source-sink pair.  That is, each flow from a source to a sink must be allocated both (i) a certain amount of bandwidth on \emph{every} link in its path and (ii) a \emph{total} amount of computational across all of the nodes on its path.

In our \emph{flow with in-network processing} problem, we have a flow demand with multiple sources and multiple sinks, and each flow requires a certain amount of processing. The required processing is proportional to the flow size and, without loss of generality, we assume one unit of flow requires one unit of processing. Each flow from a source to a sink is an aggregate flow of many connections, so the routing and processing for a flow are both divisible. In this model there are two types of constraints: edge capacity constraints and vertex capacity constraints, which represent link bandwidth and node processing capacity, respectively. A feasible flow pattern satisfies three conditions: (i) for each edge, the sum over all flows on that edge is bounded by that edge's capacity, (ii) for each node, the sum over all flows of in-network processing done at that node is bounded by the vertex capacity, and (iii) each flow must be allocated a total amount of node processing power equal to its size.

Although ignoring vertex capacity constraints reduces our class of problems to those of the standard multi-commodity flow variety, the introduction of these constraints yields a new class of problems that has not been studied before. This paper provides a systematic approach to this new class of network problems, applicable to both directed and undirected graphs. 

In \Cref{sec:problem}, we introduce the \textsc{processed flow routing} class of problems, in which we discuss how to optimize processed flow routed in a fixed network. 
Next, in \Cref{subsec:walk,subsec:edge} we present two linear programming-based algorithms to find a maximum feasible multi-commodity flow with the additional processing constraints. We show that, like standard multi-commodity flow, the program can be written in two different equivalent ways: either with an exponentially-sized walk-based LP or with a polynomially-sized edge-based LP. The proof of equivalence of equivalence of these two LPs requires a more careful argument than that for standard MCF. As an aside, we argue that this pair of LPs can also be adapted to optimize several other objective functions, such as those minimizing congestion. In \Cref{sec:mwu}, we then design an efficient multiplicative weight update (MWU) algorithm that finds approximately optimal solutions to our walk-based linear program far more quickly than one could with the edge-based program paired with an off-the-shelf LP-solver.

In \Cref{sec:networkdesign}, we consider the {\sc middlebox node purchase} class of problems, in which the goal is to optimally purchase processing power at various middleboxes. Prices for placing processing at the various nodes is given as part of the input, and may differ substantially from one location to the next. This class of problems comes has two natural variants: 

\begin{enumerate}
\item \minmp: given a set of flow demands, minimize cost while purchasing enough middlebox processing capacity so that all flow demands are simultaneously satisfiable (that is, jointly routable and steerable).
\item \maxmp: given a set of flow demands and a budget of $k$ dollars, spend at most $\$k$ on purchasing middlebox processing capacity while maximizing the fraction of the given demand that is simultaneously satisfiable.
\end{enumerate}

 Linear programs for both of these problems can be found in \Cref{dirminalg,dirmaxalg}. For \minmp, we show an $O(\log(n)/\delta^2)$ approximation for node costs and an associated multi-commodity flow that satisfies $(1-\delta)$ fraction of the demands and satisfies all edge capacities, where $n$ is the number of nodes. We show that in the directed case, the problem is hard to approximate better than a logarithmic factor, even if the demand requirements are relaxed. Additionally, we show that the undirected case is at least as hard to approximate as {\sc Vertex Cover}. 

We also prove approximation and hardness results for \maxmp. Although it's tempting to conjecture that the problem is an instance of {\sc Budgeted Submodular Maximization}, one can construct instances on both directed and undirected graphs where the amount of routable processed flow is \textit{not} submodular in the set of purchased nodes, so black-box submodular maximization techniques cannot be used here.
We show an $\Omega(1/\log(n))$ approximation for both problems, as well as a constant factor approximation algorithm for undirected instances with a single source-sink pair.
For the directed case, we show approximation hardness of $1-1/e$, and constant factor hardness for the undirected problem.

\section{Flow Routing with In-Network Processing}
\label{sec:problem}

\begin{figure}
	\centering
	\includegraphics[width=\linewidth]{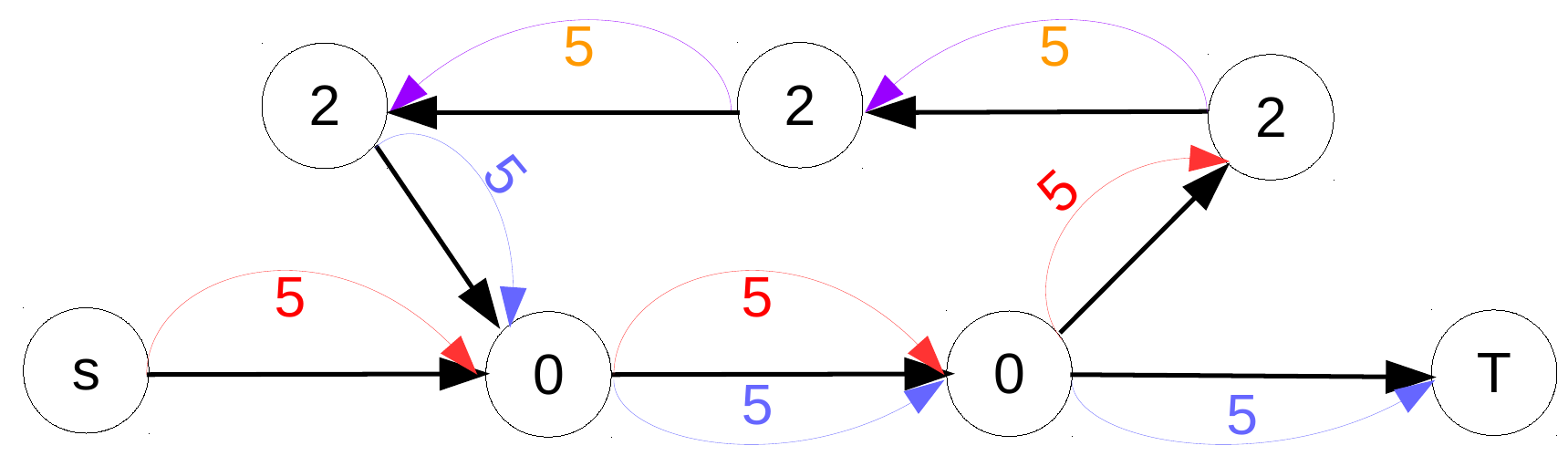}
	\caption{The edge capacity is 10 for all edges and the node capacities are denoted in each node. Here, we can send maximum flow size 5, by routing it along the red arcs, have it processed at the nodes at the top, and then sent to $T$ along the blue arcs. The capacity of the bottom middle edge forms the bottleneck here, as all flow must pass through it twice before reaching $T$.}
	\label{fig:loop}
\end{figure}

\subsection{Processed Flow Routing Problem}
\label{subsec:basicproblem}

Network Function Virtualization (NFV) allows each node to function as a general-purpose server that can run any in-network processing task, such as transcoding, compression, and encryption. Such servers can reside anywhere in the network, from the leaf nodes (as in the case of traditional servers) to intermediary nodes (such as top-of-rack and spine switches). 

Therefore, in our model, we treat all in-network processing as \emph{homogeneous}, meaning that every node with a sufficient quantity of available computational resources can be adapted to accomplish any processing task. In practice, this can be accomplished simply by spinning up a new virtual machine for that specific task as needed. We assume that all flows are both aggregate and sufficiently large that they can be treated as continuous quantities (and thus can be arbitrarily subdivided), and that the processing capacity of a given node can also be fractionally divided among a number of different flows. 

Each flow is initially generated at a source fully unprocessed. By the time it reaches its destination, it needs to go through and get processed by one or more intermediate processing nodes with available computational resources. We assume that each unit of flow requires one unit of processing, meaning that for any given flow $f$, the total processing workload done on $f$ by vertices along $f$'s flow path should equal the size $f$.

This problem can be modeled mathematically as follows. We are given a directed graph $G = (V,E)$ along with edge capacities $B : E \rightarrow \positivereal$, vertex capacities $C : V \rightarrow \nonnegativereal$, and a collection of flows $D = \{(s_1, t_1, k_1), (s_2, t_2, k_2), \cdots\} \subseteq V \times V \times \positivereal$. While the edge capacities are used in the same way as in a standard multi-commodity flow problem, we also require that each unit of flow undergo a total of one unit of processing at intermediate vertices. In particular, while edge capacities limit the \textit{total} amount of flow that may pass through an edge, vertex capacities only bottleneck the amount of processing that may be done at a given vertex, regardless of the total amount of flow that uses the vertex as an intermediate node. The goal is then either to route as much flow as possible, or to satisfy all flow demand subject to a congestion-minimization objective function. For concreteness, this paper focuses on maximizing the total amount of flow we can send between the source-destination pairs while satisfying edge and node capacity constraints. In practice we can also extend our results to other objective functions such as minimizing the weighted sum of congestion at edges and nodes.


\begin{table}
	\normalsize
	\begin{tabular}{ | c | l | }
		\hline
		\textbf{Variable} & \multicolumn{1}{c|}{\textbf{Description}} \\ \hline
		$V$ & set of nodes in a graph \\ \hline
		$E$ & set of edges in a graph\\ \hline	
		$B(e)$ & edge capacity for edge $e$\\ \hline
		$C(v)$ & node capacity for node $v$\\ \hline
		$D$ & the set of flow demands\\ \hline 
		$\delta^+(v)$ & the edges leaving vertex $v$\\ \hline  
		$\delta^-(v)$ & the edges entering vertex $v$ \\ \hline
		$P$ & the set of walks from sources to destinations \\ \hline
		$p_{i,\pi}^v$ & walk-based; the amount of flow $i$ from $s_i$ to $t_ i$ \\
		& exactly using  walk $\pi$ and processed at $v$\\ \hline
		$f_i(e)$ & edge-based; the amount of flow $i$ that \\
		& traverses $e$ on its way from $s_i$  to $t_i$ \\ \hline
		$w_i(e)$ & edge-based; the amount of unprocessed flow  $i$ \\
		& that traverses $e$ on its way from $s_i$ to $t_i$\\ \hline
		$p_i(v)$ & edge-based; the amount of processing done at \\ & node $v$ for the $i$th flow \\ \hline
	\end{tabular}
	\caption{Variables in the optimization solutions}
\end{table}


\subsection{A 2-Walk-based Solution}
\label{subsec:walk}

We now describe a \emph{\twalk-based} formulation of the problem. A \emph{\twalk} from $s$ to $t$ is a route between $s$ and $t$ that visits each vertex (and thus each edge) at most \emph{twice}. 

The approach we take is analogous to \textit{path-based} solutions for the traditional multicommodity flow (MCF) problem, with the key difference that, unlike paths, our \twalks may visit vertices and edges more than once.  Additionally, a \twalk may traverse the same undirected edge in both directions. 

To express the \textit{\twalk-based} linear program, we introduce one variable $p_{i,\pi}^v$ for each \{\twalk{}\}--vertex--demand triplet, representing the total amount of flow from $s_i,t_i$ exactly utilizing walk $\pi$ and processed at $v$. Note here the set $P$ of \twalks is an enumeration of all possible \twalks in the graph, which can be \emph{exponential} in size. The LP is then formulated as follows:

 \vspace{-.2in}
\begin{subequations}
	\begin{align*}
	&\textsc{maximize }& \Sigma_{i=1}^{|D|}\textstyle\sum_{\pi \in P}p_{i,\pi}
	\\  
	&\textsc{subject to} \\
	& p_{i,\pi} = \sum \limits_{v\in \pi} p_{i,\pi}^v& \forall i \in [|D|], \forall \pi \in P \\
	& \sum\limits_{i=1}^{|D|}\sum \limits_{\stackrel{\pi\in P}{\pi \ni e}} p_{i,\pi} \leq B(e) & \forall e \in E\\
	&\sum\limits_{i=1}^{|D|} \sum \limits_{\pi\in P} p_{i, \pi}^v \leq C(v) &\forall v \in V \\
	&p_{i,\pi}^v \geq 0 & \hspace{-.6in} \forall i\in[|D|],\forall \pi\in P,\forall v \in V
	\end{align*}
\end{subequations}
While the first constraint enforces that all flows are fully processed, the second and third constraints ensure that no edge or vertex is over-saturated. 

\section{An Edge-based Polynomially-Sized LP}
\label{sec:routing}

Although the \twalk-based solution exactly solves our MCF with in-network processing problem, the LP may be exponentially sized and thus even writing it down (let alone solving it) leaves us with an exponential worst-case running time. In \Cref{subsec:edge}, we present a polynomially-sized (and thus polytime-solvable) edge-based linear program for this problem. We then follow this up by a proof of correctness in \Cref{subsec:flowmax}.


\subsection{The Edge-Based Solution}
\label{subsec:edge}

A standard technique for solving the traditional MCF problem relies on constructing a polynomially-sized \emph{edge-based} LP whose set of feasible solutions equals that of an exponentially-sized path-based LP. Analogously, we establish a polynomial-sized edge-based LP corresponding to the \emph{\twalk-based} LP introduced in \Cref{subsec:walk}.

\begin{subequations}
	\footnotesize
	\begin{align}
	\hspace{-.2in}	&\textsc{maximize }&\sum\limits_{i=1}^{|D|}\sum \limits_{e\in \delta^+(s_i)} f_i(e)  \\ 
	&\textsc{Subject to}\\
	&\sum\limits_{e \in \delta^-(v)}  f_i(e)=  \sum\limits_{e \in \delta^+(v)} f_i(e)\hspace{-.25in}& \forall i \in [|D|], \forall v \in V \setminus \{s_i,t_i\}\\
	&p_i(v) = \sum\limits_{e \in \delta^-(v)} w_i(e) - \sum\limits_{e \in \delta^+(v)} w_i(e)  \hspace{-.25in} & \forall i \in [|D|],\forall v \in V \setminus \{s_i\}\\
	&\sum\limits_{i=1}^{[D]} f_i(e)\leq B(e)&\forall e \in E\\
	&\sum\limits_{i=1}^{|D|} p_i(v) \leq C(v)&\forall v \in V\\
	&w_i(e) \leq f_i(e)  &\forall i \in [D], \forall e \in E \\
	&w_i (e)= f_i (e)&\forall i \in [D],\forall e \in \delta^+(s_i) \\
	&w_i (e)= 0&\forall i \in [D], \forall e \in \delta^-(t_i) \\
	&w_i (e), p_i (v) \geq 0&\forall i \in [D],\forall e \in E 
	\end{align}
\end{subequations}

The LP constraints can be interpreted as follows. Constraint (2c) is a flow conservation constraint: at any non-terminal node of flow $i$, the amount of flow $i$ that enters the node should equal the amount that leaves it. Constraint (2d) is a \emph{processing conservation} constraint, ensuring that the total amount of flow (processed or unprocessed) going through a node remains unchanged, although the quantity of each might change if the node processes any of the flow. Constraints (2e) and (2f) ensure that we don't exceed the edge and node capacity. Constraint (2g) ensures that the amount of work yet to be done on a flow does not exceed the size of the flow itself, while (2h) and (2i) ensure that all flows leave the sources unprocessed and arrive to the destinations fully processed.


\subsection{Proof of Equivalence to the \Twalk-Based LP}
\label{subsec:flowmax}
While the construction of the edge-based LP is not particularly difficult, it is not obvious that the edge-based solution actually solves the problem in question. We need to prove the \emph{correctness} of the edge-based LP. A priori, solutions to the edge-based LP here may not be decomposable to a valid routing pattern at all. In this subsection, we provide an efficient algorithm converting feasible solutions to the edge-based LP into corresponding solutions to the \twalk-based program, proving both that the edge-based LP is correct and that the actual flow paths can be recovered in polynomial time as well. 
We summarize this result in the following theorem.

\begin{theorem}
\label{thm:flowmax}
The edge-based formulation provides a polynomial-sized linear program solving the Maximum Processed Flow problem. Further, the full routing pattern can be extracted from the LP solution by decomposing it into its composing $s_i, t_i$ \twalks in $O(|V| \cdot |E| \cdot |D| \cdot  \log |V|)$ time.
\end{theorem}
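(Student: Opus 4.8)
The edge-based LP is manifestly of polynomial size, hence polynomial-time solvable, so the theorem reduces to showing that its optimum coincides with the optimum of the \twalk-based LP, together with the stated algorithmic extraction. This splits into an easy direction and a hard one. The easy direction — every feasible \twalk-based solution induces a feasible edge-based solution of the same objective value — is pure aggregation: set $f_i(e)$ to the total \twalk-flow of commodity $i$ crossing $e$ (counted with multiplicity two when a walk traverses $e$ twice), set $w_i(e)$ to the part of that flow still unprocessed on $e$ — walk by walk, the traversals of $e$ preceding the walk's processing vertex — and set $p_i(v)=\sum_{\pi}p_{i,\pi}^{v}$; then check each edge-based constraint in turn. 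The content is the converse, which must also be algorithmic: from an arbitrary feasible edge-based solution, construct a \twalk-based solution of at least the same value in $O(|V|\cdot|E|\cdot|D|\cdot\log|V|)$ time. By a routine preprocessing step I would first reduce to the case where the edge-based solution routes no flow into any $s_i$, out of any $t_i$, or around a cycle.

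The device I would use is a two-layer ``phase-expanded'' graph $G_i$, built separately per commodity $i$: split every vertex $v$ into an \emph{unprocessed} copy $v^{\mathrm u}$ and a \emph{processed} copy $v^{\mathrm p}$; for each $e=(u,v)\in E$ install an arc $u^{\mathrm u}\to v^{\mathrm u}$ carrying $w_i(e)$ and an arc $u^{\mathrm p}\to v^{\mathrm p}$ carrying the nonnegative amount $f_i(e)-w_i(e)$; and for each $v$ install a ``processing'' arc $v^{\mathrm u}\to v^{\mathrm p}$ carrying $p_i(v)\ge 0$. The crucial observation is that the processing-conservation constraint (2d) is exactly flow conservation at each $v^{\mathrm u}$, while a one-line computation from the flow-conservation constraint (2c) gives conservation at each $v^{\mathrm p}$; combined with the terminal constraints (2h), (2i) and the preprocessing assumption, this certifies that the arc-flows above constitute a genuine $s_i^{\mathrm u}\to t_i^{\mathrm p}$ flow in $G_i$ of value equal to the amount of flow commodity $i$ routes.

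Next I would decompose this $s_i^{\mathrm u}\to t_i^{\mathrm p}$ flow into simple paths, discarding any flow-carrying cycles (which only lowers arc usage and preserves the value). Since a path begins in the $\mathrm u$-layer, the only layer-crossing arcs go from $\mathrm u$ to $\mathrm p$, and no arc returns to the $\mathrm u$-layer, each simple path uses \emph{exactly one} processing arc $v^{\mathrm u}\to v^{\mathrm p}$; hence it projects in $G$ to the concatenation of a simple $s_i$--$v$ path with a simple $v$--$t_i$ path, a route visiting every vertex and every edge at most twice — precisely a \twalk, with all of its processing at $v$. Introducing one variable $p_{i,\pi}^{v}$ per path of the decomposition yields a feasible \twalk-based solution of the same objective: over any edge $e$ the $\mathrm u$- and $\mathrm p$-layer arc-flows sum to at most $f_i(e)\le B(e)$, and the processing arcs at a vertex $v$ sum to at most $C(v)$. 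This is exactly why \twalks, and not paths, are the correct primitive here — the construction parallels the classical path-decomposition for multi-commodity flow, but needs the second layer to track \emph{when} each unit of flow is processed.

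For the running time, $G_i$ has $O(|V|)$ vertices and $O(|E|)$ arcs, so each extracted path has length $O(|V|)$ and saturates at least one arc, which bounds the number of \twalks per commodity by $O(|E|)$; storing each vertex's positive-residual out-arcs in a balanced search tree lets a path be traced and its bottleneck cancelled in $O(|V|\log|V|)$ time, giving $O(|V|\cdot|E|\cdot\log|V|)$ per commodity and $O(|V|\cdot|E|\cdot|D|\cdot\log|V|)$ overall. I expect the main obstacle to be neither the time analysis nor the easy direction, but rather (i) pinning down the phase-expanded graph and verifying that it really is a valid flow — constraint (2d) is doing all the work, and the terminal vertices are easy to mishandle — and (ii) the observation that simple paths in $G_i$ project to genuine \twalks in $G$, together with the boundary cleanup that makes the projection faithful and loss-free. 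Everything downstream of that is bookkeeping.
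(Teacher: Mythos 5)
Your proof is correct in substance but takes a genuinely different route from the paper. The paper works directly in $G$: it first cancels only the ``monochromatic'' cycles (those carrying purely unprocessed or purely processed flow), then repeatedly picks a vertex $v$ with $p(v)>0$ and extracts a flow-\twalk by a backward traversal from $v$ that greedily maximizes $\rho(e)=w(e)/f(e)$ and a forward traversal that minimizes it, arguing from constraints (2c), (2d), (2j) that these traversals can never get stuck before reaching $s$ and $t$; the bottleneck of the three relevant quantities is subtracted, and an $O(|E|)$ bound on the number of extractions with $O(|V|\log|V|)$ per traversal gives the stated running time. Your phase-expanded two-layer graph makes the same structure explicit: cycles in your layered graph are exactly the paper's monochromatic cycles (no arc returns from the processed layer to the unprocessed one), and your simple $s_i^{\mathrm u}\to t_i^{\mathrm p}$ paths, which cross exactly one processing arc, are exactly the paper's single-processing-vertex \twalks. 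What your version buys is modularity: the edge-LP constraints literally become flow conservation, so classical path/cycle decomposition replaces the paper's ad hoc ``traversal cannot get stuck'' argument, and the one-processing-vertex structure of each extracted \twalk is transparent rather than emergent; the paper's version avoids the auxiliary graph and achieves the same asymptotics.

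Two cautions. First, your preprocessing claim that the solution ``routes no flow around a cycle'' must not be read in $G$: as the paper stresses (and \Cref{fig:loop} shows), cycles in the support of $f_i$ can be essential, and only cycles of your layered graph (pure-$w$ or pure-$(f-w)$ cycles) may be discarded --- which your decomposition step already does, so the preprocessing clause is redundant rather than wrong. Second, the source/sink cleanup is doing more work than ``routine'' suggests: the edge-LP objective is the \emph{gross} outflow of $s_i$, and since (2c)--(2d) are not imposed at $s_i$, processed flow may re-enter $s_i$ and vanish there (a path ending at $s_i^{\mathrm p}$ in your layered graph), in which case the decomposition only certifies the flow actually delivered to $t_i^{\mathrm p}$. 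To make the value-preservation claim airtight one should either add the harmless constraints $f_i(e)=0$ for $e\in\delta^-(s_i)$ and $e\in\delta^+(t_i)$ or argue such degenerate flow away at optimality; notably the paper's own forward-traversal argument silently assumes the same thing, so this is a shared blind spot rather than a defect specific to your approach.
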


Notably, as the reduction maps the set of feasible solutions to the edge-based LP to equivalent feasible solutions of the \twalk-based LP, the same technique can also be used to show the equivalence of the two corresponding programs when the objective function is changed to optimize some other linear quantity, such as the amount of congestion. 

The first part of the proof involves showing how to construct a solution to the flow-based LP when there is exactly one $s_i, t_i$ pair. Extracting the corresponding flow paths and iterating this procedure for each demand pair eventually extracts all $s_i,t_i$ flows, giving us a solution to the multicommodity problem.

The flow extraction argument proceeds in two steps. First, we simplify the solution by removing extraneous loops that do not affect the optimal solution. Next, we show that the existence of any residual flow in the graph (i.e., the existence of some strictly positive $f_i(e)$) implies that there exists at least one valid \twalk we can efficiently extract while maintaining feasibility of all constraints for the updated residual graph. As we show, a linear number of extractions suffices to remove all flow from the solution.

\subsubsection{Removing extraneous loops}
Suppose we are given a nonempty solution to the \emph{edge-based} LP for an instance with graph $G(V,E)$. We focus on some (arbitrarily chosen) commodity $i$ with positive flow in this solution, and drop subscripts to let $f(e)$, $w(e)$, and $p(v)$ denote $f_i(e)$, $w_i(e)$, and $p_i(v)$, respectively. To assist with our exposition, we restrict our attention to the subgraph $G'$ which excludes all edges for which $f(e) = 0$. For each edge $e$ in this subgraph, we also associate two new variables, $f^1(e)$ and $f^2(e)$ denoting the amount of unprocessed and processed flow passing through this edge, respectively. Thus, by definition, $f^1(e) = w(e)$ and $f^2(e) = f(e) - w(e)$.

As in solutions to the edge-based linear program for the standard multicommodity flow problem, solutions to our edge-based LP may introduce closed loops (that is, directed cycles along which a positive amount of flow is routed). In traditional MCF, such loops are easily shown to be \emph{non-essential}, and can be easily removed from a feasible solution without affecting its correctness. As illustrated in \Cref{fig:loop}, such loops may actually be critical in solutions to our variant, and handling such cases takes additional care. Thus, instead of arguing that cycles can be removed (so that the flows form a set of paths), we show how to ensure that no vertex may be visited more than twice (and thus the flows form a set of \twalks). In particular, we show how to cancel out all cycles along which each edge contains $f^1$ flow, as well as all cycles along which each edge carries $f^2$ flow.

\begin{lemma} Any closed loop for which every edge contains $f^1$ (resp. $f^2$) flow can be removed without affecting the total $(s,t)$ flow. 
\label{lem:flowcancel}
\end{lemma}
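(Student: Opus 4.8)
The plan is to treat the $f^1$ ("unprocessed") and $f^2$ ("processed") flows as two separate single-commodity flows on the subgraph $G'$, and to cancel circulations within each of them exactly as one does in standard multicommodity flow, while checking that doing so does not violate the processing-conservation constraint (2d) or any of the boundary conditions (2h), (2i). First I would set up the bookkeeping: recall $f^1(e) = w(e)$, $f^2(e) = f(e) - w(e)$, and note that the processing amount at a vertex $v$ is $p(v) = \sum_{e \in \delta^-(v)} f^1(e) - \sum_{e \in \delta^+(v)} f^1(e)$, which by constraint (2c) equals $\sum_{e \in \delta^+(v)} f^2(e) - \sum_{e \in \delta^-(v)} f^2(e)$. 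So $p(v)$ is simultaneously the net $f^1$-inflow and the net $f^2$-outflow at $v$.

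Next, suppose $Z$ is a directed cycle on which every edge carries positive $f^1$ flow. Let $\mu = \min_{e \in Z} f^1(e) > 0$, and replace $f^1(e)$ by $f^1(e) - \mu$ for each $e \in Z$ (equivalently, increase $f^2(e)$ by $\mu$ on those edges so that $f(e) = f^1(e)+f^2(e)$ is unchanged). I would then verify the four things that must be preserved: (i) edge capacities (2e) hold because $f(e)$ is untouched; (ii) $w(e) \ge 0$ and $w(e) \le f(e)$ still hold because we only decreased $w(e)=f^1(e)$ toward $0$, and by the choice of $\mu$ it stays nonnegative; (iii) the processing-conservation quantity $p(v)$ is unchanged at every vertex, because a cycle contributes equally to the in-degree and out-degree sum of $f^1$ at each vertex it passes through, so the net $f^1$-flow through each $v$ is preserved — hence $p(v)$, the node capacities (2f), and constraint (2d) are all intact; (iv) the boundary constraints (2h) $w(e)=f(e)$ on $\delta^+(s)$ and (2i) $w(e)=0$ on $\delta^-(t)$: here I would invoke the structural fact that a directed cycle through $s$ or $t$ can be assumed absent, or more carefully, observe that since $Z$ is a cycle it uses an edge into $t$ iff it uses an edge out of $t$; on an edge of $\delta^-(t)$ we had $f^1(e)=0$ to begin with, so $Z$ cannot use such an edge, and thus (2i) is vacuously preserved. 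The symmetric argument applies to a cycle on which every edge carries $f^2$ flow: there we increase $f^1$ by $\mu$ on the cycle, the roles of "toward $0$" and "toward $f(e)$" swap, the same four checks go through, and on $\delta^+(s)$ we had $f^2(e)=0$ so $Z$ avoids those edges, preserving (2h). Finally, since the objective $\sum_{e\in\delta^+(s)} f(e)$ depends only on $f$, which is unchanged, the total $(s,t)$ flow is preserved; iterating (each step zeroes out at least one edge's $f^1$ or $f^2$ component without creating new positive ones) terminates.

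The main obstacle I anticipate is not the cycle-cancellation itself — that is the routine MCF argument — but the careful handling of the source and sink, i.e. making sure the cancellation never disturbs constraints (2h) and (2i). The clean way to dispatch this is the observation above: an $f^1$-cycle cannot touch any edge of $\delta^-(t)$ (those edges have $f^1 = 0$), and an $f^2$-cycle cannot touch any edge of $\delta^+(s)$ (those have $f^2 = 0$), so the boundary constraints are automatically immune to the operation that could endanger them. One should also be mildly careful that after augmenting we have not pushed some $f^1(e)$ or $f^2(e)$ strictly negative, which is exactly why $\mu$ is taken to be the minimum over the cycle; and that we remain inside $G'$ or else drop newly-zeroed edges, so the process is finite.
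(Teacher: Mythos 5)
There is a genuine gap: your cancellation operation is not the same as removing the loop, and it does not actually achieve what the lemma is needed for. You keep $f(e)$ fixed and merely relabel flow along the cycle between the unprocessed and processed categories (decreasing $w$ on an $f^1$-loop, increasing it on an $f^2$-loop). Two problems follow. First, the boundary constraints: you correctly dispatch the two vacuous cases (an $f^1$-loop cannot use an edge of $\delta^-(t)$, an $f^2$-loop cannot use an edge of $\delta^+(s)$), but the dangerous cases are the other two, and those you leave to the unsubstantiated claim that loops through $s$ or $t$ ``can be assumed absent.'' They cannot: a feasible solution may contain an $f^1$-loop through an edge of $\delta^+(s)$ (e.g.\ a circulation $s \to a \to s$ with $f = w = 1$ on both arcs is feasible), and your operation then leaves $w(e) < f(e)$ on an edge of $\delta^+(s)$, violating (2h); symmetrically, an $f^2$-loop through $\delta^-(t)$ exists (a processed circulation $t \to b \to t$), and increasing $w$ there violates (2i).

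Second, and more fundamentally, your termination claim (``without creating new positive ones'') is false, and the relabeling moves are provably insufficient. Converting $f^1$ into $f^2$ along a loop creates new positive $f^2$ components (and vice versa), so the process can oscillate; worse, since each move fixes every $f(e)$ and shifts $f^1$ by the \emph{same} amount on every edge of the loop, the differences $f^1(e)-f^1(e')$ along the loop are invariant. Take the $2$-cycle $a \to b \to a$ with $(f^1,f^2)=(3,1)$ on $a\to b$ and $(1,3)$ on $b\to a$ (embeddable in a feasible solution by adding $s\to a$ with $f=w=2$ and $a\to t$ with $f=2$, $w=0$): the invariant forces that any state with no all-$f^1$ loop still has an all-$f^2$ loop, so no sequence of your moves ever reaches the configuration the extraction step needs, namely that every remaining cycle with positive $f$ has an edge with $\rho=0$ and an edge with $\rho=1$. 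The paper's operation avoids all of this by genuinely subtracting flow: on an $f^1$-loop it decreases both $f(e)$ and $w(e)$ by $\min_{e'} f^1(e')$, on an $f^2$-loop it decreases $f(e)$ alone. This preserves (2h) and (2i) (both sides drop equally, or $w$ is untouched where it is $0$), is monotone in the total flow so it terminates, and leaves the delivered $(s,t)$ flow unchanged.
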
 

\begin{proof} This argument proceeds similarly to the flow cancellation arguments in the traditional MCF setting. For any loop $l$ containing a positive amount of $f_1$ flow, reducing both $f(e)$ and $w(e)$ on the constituent edges by $\min_{e' \in l} f^1(e')$ ensures that all constraints in the LP remain satisfied. For loops containing a positive amount of $f_2$ flow, similarly reducing just $f(e)$ suffices.
\end{proof}

\subsubsection{Extracting \twalks} Suppose extraneous loops have been removed using the process described in \Cref{lem:flowcancel}. Define $\rho_{e} = \frac{w(e)}{f(e)} = \frac{ f^1(e)} {f^1(e)+f^2(e)}$. By \Cref{lem:flowcancel}, every cycle with a positive $f(e)$ on each edge contains at least one edge with $\rho = 1$ and another with $\rho = 0$. We now repeat the following until all flow is removed from the graph. Select a vertex $v$ that is allocated processing (i.e., $p(v) > 0$), and run a backwards traversal from $v$, at each step selecting the incoming edge with the largest fraction of unprocessed flow (i.e., maximizing $\rho(e)$) until we reach $s$. Similarly, run a forward traversal from $v$ to $t$ along edges minimizing $\rho$.  This route will be our ``flow-\twalk''. The amount of flow routable along this flow-\twalk is the minimum of three quantities: (1) the smallest amount of unprocessed flow sent on each edge of the $s \leadsto v$ path, (2) the smallest amount of processed flow sent along each edge of the $v \leadsto t$ path, and (3) the amount of processing still to be done at $v$ (i.e., $p(v)$). We then extract this flow-\twalk from the solution by decreasing each LP variable accordingly. Complete pseudo-code for this algorithm is given in \Cref{cap:algorithm}. 

\begin{algorithm}
	\SetAlgoLined\DontPrintSemicolon
	\KwData{$G'(V, E)$, $ w(e)$, $f(e)$ for $\forall e \in E$ and $p(v)$ for $\forall v \in V$ }
	\KwResult{$f(\pi)$, $p(\pi,v)$ with $v\in\pi$ }
	\SetKwFunction{twowalk}{TwoWalkConstruction}  
	\SetKwFunction{flow}{FlowPlacement}  
	\SetKwProg{myalg}{Algorithm}{}{}
	\myalg{\twowalk{$s$, $t$, $v$}}  {
		\emph{//Construct \twalk from $s\rightarrow v$ and $v\rightarrow t$}\\
		From $v$, run a backward traversal, each time picking an incoming edge $e$ maximizing $\rho(e) = w(e)/f(e)$\\
		From $v$, run a forward traversal, each time picking an outgoing edge minimizing $\rho(e) = w(e)/f(e). $\\
		\KwRet $\pi$\\}{}
	\setcounter{AlgoLine}{0}
	\SetKwProg{myproc}{Algorithm}{}{}
	\myproc{\flow{$s$, $t$}}{
		\BlankLine
		\While{there exists a $v$ with $p(v) > 0$}
		{
			$\,\,\,\pi \leftarrow \twowalk{s, t, v}$\\
			$f' \leftarrow \min_{e \in \pi, e\textrm{ precedes } v} f^1(e)$ \\
			$f'' \leftarrow  \min_{e \in \pi, e\textrm{ succeeds } v} f^2(e)$ \\
			$p_\pi^v \leftarrow \min\{f', f'', p(v)\}$\\
			\For{$u\in \pi$ and $u\not=v$ }
			{
				$p_\pi^u=0$\\
			}
			$\,\,\,C(v) \leftarrow C(v) - p_\pi^v$\\
			$p(v) \leftarrow p(v) - p_\pi^v$\\
			\For{$e \in \pi$}{
				$\,\,\,f(e)\leftarrow f(e)-p_\pi^v$\\
				$B(e) \leftarrow B(e)-p_\pi^v$
			}
		}
	}
	\caption{\Twalk Decomposition}\label{cap:algorithm}
\end{algorithm}

\begin{lemma} [\Twalk Extraction] \Cref{cap:algorithm} can always generate a \twalk with non-zero flow from source to sink if there exists any $v$ where $p(v) >0$. Further, the number of iterations needed of \Cref{cap:algorithm} is bounded by $O(|E|)$, each of which can be made to take $O(|V| \log |V|)$ time. Thus, the total running time is $O(|E|\cdot|V| \log |V|)$
\end{lemma}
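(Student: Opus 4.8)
The plan is to prove the three claims of the lemma in turn: (i) a single call to \texttt{TwoWalkConstruction}/\texttt{FlowPlacement} extracts a genuine \twalk carrying strictly positive flow while preserving feasibility of the edge-based LP on the residual graph, (ii) only $O(|E|)$ such extractions occur, and (iii) each costs $O(|V|\log|V|)$ with the right data structures. Throughout I work with the fixed commodity $i$ as in the preceding text, write $f^1=w$, $f^2=f-w$, and $\rho_e = w(e)/f(e)$, and assume \Cref{lem:flowcancel}'s preprocessing has removed all cycles that are monochromatic in $f^1$ or in $f^2$.

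For (i), the crux is that the backward traversal from $v$ only ever follows edges with $f^1(e)>0$ and the forward traversal only follows edges with $f^2(e)>0$. I would get this from the conservation constraints. Constraint (2d) gives $p(u)=\sum_{e\in\delta^-(u)}w(e)-\sum_{e\in\delta^+(u)}w(e)\ge 0$ for every $u\neq s$, so a vertex with positive outgoing unprocessed flow has at least as much incoming unprocessed flow; since $p(v)>0$, $v$ has positive incoming $f^1$, and inductively every vertex the traversal reaches before $s$ has a positive-$f^1$ incoming edge, which the $\rho$-maximizing rule selects. Combining (2c) and (2d) yields the dual identity $\sum_{e\in\delta^+(u)}f^2(e)=\sum_{e\in\delta^-(u)}f^2(e)+p(u)$ at internal $u$, so the forward traversal can always continue until it hits $t$. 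Neither traversal can cycle: a repeated vertex would close a cycle all of whose edges carry positive $f^1$ (resp. $f^2$), contradicting the preprocessing; and since extraction only decreases $w$ and $f^2$ values, no such cycle can reappear. Hence each leg is a simple path, the concatenation $\pi$ visits every vertex at most twice and is a valid \twalk, and $p_\pi^v=\min\{f',f'',p(v)\}$ is a minimum of three strictly positive quantities, so the extracted flow is nonzero. Finally I would verify that subtracting this amount keeps every LP constraint satisfied: (2c) because in- and out-flow drop equally at each internal visit; (2d) because on the $s\leadsto v$ leg both $w$ and $f$ drop by $p_\pi^v$ (so $p(u)$ is unchanged there) while at $v$ only the incoming $w$ drops; (2g) on the $v\leadsto t$ leg because $p_\pi^v\le f''\le f^2(e)=f(e)-w(e)$; and (2h)/(2i) because the forward traversal never uses a $\delta^+(s)$ edge (those have $f^2=0$) and the backward traversal never uses a $\delta^-(t)$ edge (those have $w=0$).

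For (ii), I would argue that every iteration permanently zeroes at least one of three quantities — $p(v)$, or $w(e)=f^1(e)$ for the edge attaining $f'$ on the $s\leadsto v$ leg, or $f^2(e)$ for the edge attaining $f''$ on the $v\leadsto t$ leg — according to which term realizes the minimum. Since no extraction ever increases any $p(\cdot)$, any $f^1(\cdot)$, or any $f^2(\cdot)$, such a quantity stays zero once it reaches zero; moreover a zeroed edge is thereafter unusable by the traversal in whose leg it was zeroed (the backward traversal needs $f^1>0$, the forward one needs $f^2>0$), so it can never be the edge zeroed in that role again. This caps the flow-zeroing iterations at $2|E|$, and since every vertex with $p(v)>0$ has positive incoming flow (hence nonzero $G'$-degree, hence is one of $O(|E|)$ candidates) the budget-zeroing iterations number $O(|E|)$; altogether $O(|E|)$ iterations.

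For (iii) I would maintain, for each vertex $u$, a max-heap of its incoming $G'$-edges and a min-heap of its outgoing $G'$-edges, both keyed by $\rho_e$, together with a set of vertices having $p(v)>0$; then a traversal step is an $O(1)$ heap-peek, a whole traversal and the walk length are $O(|V|)$, and re-keying (or deleting) the $O(|V|)$ edges of $\pi$ after an extraction costs $O(|V|\log|V|)$, which dominates. Multiplying by the $O(|E|)$ iteration bound gives $O(|E|\cdot|V|\log|V|)$. I expect the main obstacle to be part (i): cleanly establishing that both traversals are always well-defined and terminating, and checking that the extraction preserves \emph{all} LP constraints — especially the interaction of (2d), (2g), (2h), (2i) at vertices and edges shared by the two legs — whereas (ii) and (iii) are essentially bookkeeping once the structural facts from (i) are in hand.
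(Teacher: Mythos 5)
Your proposal is correct and follows essentially the same route as the paper's proof: the conservation constraints (2c), (2d), (2j) guarantee the backward/forward traversals never get stuck, the cycle-cancellation preprocessing guarantees each leg is simple so the result is a \twalk, monotone zeroing of one of $p(v)$, $f^1(e)$, or $f^2(e)$ per iteration bounds the number of extractions by $O(|E|)$, and per-vertex priority queues keyed on $\rho$ give $O(|V|\log|V|)$ per iteration. You in fact spell out the feasibility-preservation and counting steps in more detail than the paper does, but the underlying argument is the same.
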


\begin{proof}
The removal of extraneous cycles guarantees that no \twalk can visit the same vertex more than twice. Now suppose that a vertex $v$ has $p(v) > 0$. By constraint (2d), the $f^1$ flow on some incoming edge and the $f^2$ flow on some outgoing edge must both be positive. By a combination of constraints (2c), (2d), and (2j), the reverse traversal from $v$ to $s$ must succeed: it cannot get ``stuck'' at a vertex $u$ with no in-edge with positive $f^1$ flow. Similarly, the forward traversal from $v$ to $t$ must find a path with positive $f^2$ on each edge. Subtracting the minimum of all of the reverse path's observed $f^1$ values, all of the forward path's observed $f^2$ values, and $p(v)$ from each of those variables ensures that all variables remain nonnegative. Further, as this operation is monotone and it decreases one of the variables to $0$, repeating this must remove all flow from the graph in at most $|V| + 2|E| = O(|E|)$ iterations. By initially constructing a priority queue for each vertex on the $f^1$, $f^2$, and $\rho$ values of its neighboring edges and updating them accordingly, the forward and backward traversals can be found in $|V| \log |V|$ time, each.
\end{proof}

We can generalize the above approach to the multicommodity problem by treating each of the commodities independently. Namely, sequentially applying the above algorithm to remove flow \twalks for each of the $|D|$ demand pair gives us a solution to the multicommodity problem without violating any of the LP constraints. Thus, we get the $O(|V|\cdot|E|\cdot|D|\cdot\log |V|)$ running time promised in the statement of \Cref{thm:flowmax}.

\section{A Multiplicative Weights Algorithm}
\label{sec:mwu}
Solving LPs as large as those described in \Cref{subsec:edge} can be expensive. Although the LP solver \emph{CoinLP} takes an average of only 2.7 seconds on traces from the 12-node Abilene network, it takes over a minute when the size of the network grows to include a still-modest 35 nodes\footnote{All computations done on a 3.3GHz Intel i5-2500K processor.}. For even larger networks, we need an entirely different approach. Previously, the \emph{multiplicative weight update} (MWU) method has been widely used to efficiently approximate the optimal solution to traditional multicommodity flow LPs~\cite{mwu,pst}. We show that, with added effort, this framework can be applied even in the presence of node processing constraints, giving $(1-\epsilon)$ approximation to the problem in time $O(|D|\cdot |E|\cdot (|E| + |V| \log |V|) \cdot \log^2 |V| / \epsilon^2)$.

We first briefly overview the MWU method in \Cref{subsec:mwuoverview}. Next, we describe how to apply the MWU method to our model including processing vertices. The proof of correctness is given in \Cref{subsec:mwuproof};

\subsection{Multiplicative Weight Update for Traditional MCF}
\label{subsec:mwuoverview}

In the traditional multiplicative weights algorithm for multicommodity flow, there an ``expert'' is assigned to each edge, each of which is initially assigned a sufficiently small weight. The algorithm then iteratively finds $s_i, t_i$ walks minimizing the sum of weighted utilization of their edges and adds together scaled down versions of these paths to eventually construct a solution.
When a path is chosen, all experts corresponding to edges along the path have their weight increased by a multiplicative factor, making it less likely that we repeat our selection of the edges. 
This process is repeated until some expert's weight surpasses the value $1$, corresponding to a fully utilized edge. When this happens, all paths are scaled down by the weight of the largest expert to ensure that no capacities are exceeded. One then shows that the final result is within a $(1-\epsilon)$ factor of the maximum multicommodity flow. 

\subsection{Formulation and Analysis}
\label{subsec:mwuformulation}
Although we derive the same $(1-\epsilon)$ approximation factor for our problem, the analysis of our multiplicative weights algorithm is quite different from that of traditional multicommodity flow. Intuitively, this is because vertex capacities are inherently very different from edge capacities: while a flow \twalk reduces the remaining capacity on \emph{all} edges it traverses, it only reduces the capacity for \emph{one} of its vertices. Thus, we set up a different update condition, as well as a different method for picking the best flow \twalks for each round.

\subsubsection{Setup} 
For each edge $e$, we have a constraint $\sum_{\pi} p_{\pi} \leq B(e)$, where $p_{\pi}$ is the amount of flow sent on \twalk $\pi$. For each vertex, the corresponding constraint is $\sum_{\pi} p_{\pi}^v \leq C(v)$, where $p_{\pi}^v$ is the amount of flow on \twalk $\pi$ that is processed at $v$. For each of these two sets of constraints, we associate one expert, (which we call $\hat{e}$ and $\hat{v}$), whose weights are denoted by $w_{\hat{e}}$ and $q_{\hat{v}}$, respectively. 

Consider a feasible solution to the \twalk-based LP. The feasible solution consists of variables of the form $p_{\pi}$ and $p_{\pi}^v$. In this section, we abuse notation and let the variable $p$ denote a feasible solution to the LP, at which point $p_{\pi}$ and $p_{\pi}^v$ become bound variables for each $\pi$ and $v$ (that is, $p$ can be thought of as a dictionary containing the aforementioned set of variables). Further, define $A(p)$ as the objective function value of $p$, i.e. $A(p) = \sum_{v \in V} \sum_{\pi \in P} p_{\pi}^v$.

For an expert $\hat{e}$ and feasible solution $p$, define the gain $M(\hat{e},p)$ by \( M(\hat{e},p) = \frac{1}{B(e)}\sum_{\pi\ni e} p_{\pi}., \) This can be thought of as the fraction of $e$'s capacity actually utilized by the feasible solution. For each expert $\hat{v}$, we define the gain $M(\hat{v},p)$ by \( M(\hat{v},p) = \frac{1}{C(v)}\sum_{\pi\ni v} p_{\pi}^v, \)
which corresponds to the fractional utilization of $v$'s processing capacity.

Let $\mathcal{D}$ be the probability distribution over experts in which the probability of choosing a given expert is proportional to its weight. For a fixed $p$, the expected gain of a random variable sampled from $\mathcal{D}$ is

$$M(\mathcal{D}, p) = \frac{\sum_{e} w_{\hat{e}} M({\hat{e}},p) + \sum_v q_{\hat{v}} M({\hat{v}},p)}{\sum_{e}w_{\hat{e}} + \sum_v q_{\hat{v}}} $$

We first make two observations: 

{\bf Observation 1: } For any feasible solution $p$, $0\le M(\mathcal{D}, p) \le 1$. This is because $M(\hat{e},p)\le 1$ and $M(\hat{v},p)\le 1$ for all $e$ and $v$. 

{\bf Observation 2: } For any feasible solution $p$ and weights $w,q$, if $\pi^* = \textrm{argmin}_{\pi} \left(\sum_{e\in \pi}w_{\hat{e}}/B(e) + \min_{\hat{v}\in \pi} q_{\hat{v}}/C(v)\right)$, then $$M(\mathcal{D},p)\ge \frac{ A(p)\left(\sum_{\hat{e}\in \pi^*}w_{\hat{e}}/B(e) + \min_{\hat{v}\in \pi^*} q_{\hat{v}}/C(v)\right)}{\sum_{e}w_{\hat{e}} + \sum_v q_{\hat{v}}}$$

This is due to the fact that:  

\begin{align*}
\allowdisplaybreaks
M(\mathcal{D},p) 
&= \frac{\sum_{e} w_{\hat{e}} M(\hat{e},p) + \sum_v q_{\hat{v}}M(\hat{v},p)}{\sum_{e}w_{\hat{e}} + \sum_v q_{\hat{v}}} \\
& = \frac{\sum_{\pi} \left(p_{\pi} (\sum_{e}w_{\hat{e}}/B(e)) + \sum_{v\in \pi }p^v_{\pi}q_{\hat{v}}/C(v)\right) }{\sum_{e}w_{\hat{e}} + \sum_v q_{\hat{v}}} \\
& \ge \frac{\sum_{\pi} \left(p_{\pi} (\sum_{e}w_{\hat{e}}/B(e) + \min_{v\in \pi} q_{\hat{v}}/C(v)\right) }{\sum_{e}w_{\hat{e}} + \sum_v q_{\hat{v}}} \\
& \ge \frac{\sum_{\pi} p_{\pi} \min_{\pi}(\sum_{e\in \pi }w_{\hat{e}}/B(e)+ \min_{v\in \pi} q_{\hat{v}}/C(v))}{\sum_{e}w_{\hat{e}} + \sum_v q_{\hat{v}}} \\
& \ge \frac{ A(p)(\sum_{e\in \pi^*}w_{\hat{e}}/B(e)+ \min_{v\in \pi^*} q_{\hat{v}}/C(v))}{\sum_{e}w_{\hat{e}} + \sum_v q_{\hat{v}}}  
\end{align*}

Where $\pi^*$ is the path minimizing the $\argmin$ in the statement of the observation. Thus, in each round, we aim to find the $\pi^*$ minimizing this value. Conditioned on us being able to do so, the rest of the MWU algorithm proceeds as follows:

\begin{enumerate}
	\item We initialize all expert weights $\{w_{\hat{e}}\}$ and $\{q_{\hat{v}}\}$ to $1/\delta$, where $\delta = (1+\epsilon) ((1+\epsilon)\cdot|E|)^{-1/\epsilon}$. This choice of $\delta$ will be justified in the analysis of \Cref{subsec:mwuproof}.
	\item  At each step $t$, given weights $w_e^t$ and $q_v^t$ on the experts, we pick the flow-\twalk $p^t$ minimizing the quantity $\sum_{e\in \pi } \frac{w_{\hat{e}}}{B(e)} + \min_{v\in \pi }\frac{q_{\hat{v}}}{C(v)}$. An efficient algorithm for finding such a \twalk is given in \Cref{subsec:shortest-path}.
	\item Given the \twalk $p^t$ chosen in the previous step, we treat this as a feasible solution to the instance, giving expert $\hat{j}$ a gain of $M(\hat{j},p^t)$. Consequently, the weight $w_{\hat{e}}$ or $q_{\hat{v}}$ of each expert $j$ is increased by a multiplicative factor of $M(\hat{j}, p^t)$.
	\item The algorithm stops when one of the weights $w_{\hat{e}}$ or $q_{\hat{v}}$ is larger than 1. Once the algorithm terminates, we scale down the flow $p^t$ computed at each round by a factor of $\log_{1+\epsilon}\frac{1+\epsilon}{\delta} = 1 - \frac{\ln \delta}{\ln 1 + \epsilon}$, and return the set of all flow-\twalks $p^t$.
\end{enumerate}


\subsubsection{Computing the Optimal Path}
\label{subsec:shortest-path}

To compute the \twalk $\pi^t$ with minimum cost, we use a dynamic programming algorithm reminiscent of Dijkstra's shortest path algorithm. Given a graph $G(V,E)$, with weights $w(e)$ on edges, weights $n(v)$ on nodes, and some source-sink pair $s,t$, we are interested in computing the following quantity 

\begin{eqnarray}
opt(s,t) := \argmin_{\pi = (s, \cdots, t), v \in \pi} cost(\pi,v)
\end{eqnarray}

where $cost(\pi,v)$ is defined as 

\[ cost(\pi,v ): =\left(\sum_{e\in \pi} w(e) + n(v)\right) \]

We compute $opt(s,t)$ in two stages. First, for every $v$, we upper bound the value of $opt(s,v)$ by $n(v)$ plus the shortest distance from $s$ to $v$. Afterwards, we use dynamic programming to iteratively decrease these upper bounds. Full details are given in \Cref{alg:optpath}.

\begin{algorithm}\label{alg:generalized_shortest_path}
	\begin{algorithmic}[]
		\REQUIRE Graph $G = (V,E)$ with edge weights $w(e)$, node weights $n(v)$, and a designated source $s$.
		\RETURN $r(v) =  opt(s,v)$ for every $v\in V$. 
		\STATE Use Dijsktra's algorithm to compute the shortest path $d(v)$ between $s$ and $v$.
		\STATE Initialize $r(v)\leftarrow d(v) + n(v)$ for all $v\in V$. $S\leftarrow \{s\}$. 
		\WHILE{$S\neq V$}
		\STATE Let $u^* = \argmin_{v \in V \setminus S} r(v)$. Add $u^*$ to $S$.
		\STATE For all neighbors $z$ of $u^*$ that are not already in $S$, let $r(z) \leftarrow \min \{r(u^*)+ w(u,z), r(z)\}$ 
		\ENDWHILE
	\end{algorithmic}
	\caption{Optimal \Twalk Algorithm}
	\label{alg:optpath}
\end{algorithm}

\subsubsection{Update}
Suppose now that the \twalk $\pi$ with smallest cost has been computed. One of two things may bottleneck the amount of processed flow that can be sent along $\pi$: either the edge capacity of some edge $e$, or the processing capacity of some vertex $v$. We consider the two cases separately

If the bottleneck is edge-based, i.e. $\sum_{v\in \pi^t} C(v) \geq \min_{e\in \pi^t} B(e)$, then let $e^t = \argmin_{e\in \pi^t} B(e)$, and let the chosen flow \twalk $p^t$ be the one satisfying
\[
p^{t,v}_{\pi} = \left\{ \begin{array}{cc} 
\frac{C(v)}{\sum_{v\in \pi^t} C(v)} \cdot B_{e^t} & \textrm{ if } \pi  = \pi^t, v\in \pi^t\\
0 & \textrm{ otherwise }
\end{array}\right.
\]
On the other hand, if $\sum_{v\in \pi^t} C(v) < \min_{e\in \pi^t} B(e)$, select $p^t$ to satisfy
\[
p^{t,v}_{\pi} = \left\{ \begin{array}{cc} 
C(v) & \textrm{ if } \pi  = \pi^t, v\in \pi^t \\
0 & \textrm{ otherwise }
\end{array}\right.
\]


\subsection{Proof of the $(1-\epsilon)$ Approximation}
\label{subsec:mwuproof}

Let $T$ be the number of rounds taken until we hit the stopping criterion, and let $\bar{p} = \sum_{t=1}^T p^t$ be the total amount of flow selected after $T$ rounds. 
By the guarantee of the multiplicative update method (Theorem 2.5 in ~\cite{mwu}), we have that for any $e$ and any $v$

$$\sum_{t=1}^T M(\mathcal{D}^t,p^t) \ge \frac{\ln (1+\epsilon)}{\epsilon} M(\hat{e},\bar{p}) -\frac{\ln m}{\epsilon} $$

$$\sum_{t=1}^T M(\mathcal{D}^t,p^t) \ge \frac{\ln (1+\epsilon)}{\epsilon} M(\hat{v},\bar{p}) -\frac{\ln m}{\epsilon} $$

Since at time $T$, $w_{\hat{e}}^T = w^0_{\hat{e}}(1+\epsilon)^{M(\hat{e},\bar{p})}$, and $q_{\hat{v}}^T = w^0_v(1+\epsilon)^{M(\hat{v},\bar{p})}$, and the stopping rule ensures that at there exists $e$ or $v$ such that $w_{\hat{e}}^T\ge 1$ or $q_{\hat{v}}^T\ge 1$, we have that either there exists an $e$ such that $M(\hat{e},\bar{p})\ge \frac{\ln 1/\delta}{\ln (1+\epsilon)}$ or there exists $v$ such that $M(\hat{v},\bar{p})\ge \frac{\ln 1/\delta}{\ln (1+\epsilon)}$. Therefore, by the guarantee of the MWU method, we have that
$$\sum_{t=1}^T M(\mathcal{D}^t,p^t) \ge  \frac{\ln 1/\delta}{\epsilon} - \frac{\ln m}{\epsilon}$$

We now attempt to bound the left-hand-side of the preceding inequality. Note that $$M(\mathcal{D}^t,p^t) = \frac{\sum_{e} w^t_{\hat{e}} M(\hat{e}^t,p^t) + \sum_v q^t_{\hat{v}}M(\hat{v}^t,p^t)}{\sum_{e}w^t_{\hat{e}} + \sum_v q^t_{\hat{v}}}$$ $$ = \frac{A(p^t)\cdot (\sum_{e\in \pi^t}w^t_{\hat{e}}/B(e) + \min_{v\in \pi^t} q^t_{\hat{v}}/C(v))}{\sum_{e}w^t_{\hat{e}} + \sum_v q^t_{\hat{v}}} $$

By the definition of $\pi^t$ and Observation 2, we have 

\begin{align*}
M(\mathcal{D}^t,p^t) & = \frac{A(p^t)(\sum_{e\in \pi^t}w^t_{\hat{e}}/B(e)+ \min_{v\in \pi^t} q^t_{\hat{v}}/C(v))} {\sum_{e}w^t_{\hat{e}} + \sum_v q^t_{\hat{v}}} \\
& \le A(p^t) / A(p^{opt})
\end{align*}

Combining these inequalities, we get that 

$$A(\bar{p}) / A(p^{opt}) \ge\sum_{t=1}^T M(\mathcal{D}^t,p^t) \ge \frac{\ln 1/(|E|\cdot\delta)}{\epsilon}$$

Fixing any edge $e$, its expert's initial weight is $1/\delta$ and its expert's final weight is at most $1+\epsilon$. Thus, $\bar{p}$ passes at most $B(e)\log_{1+\epsilon}((1+\epsilon)/\delta)$ flow through it. Similarly, for each $v$, at most $C(v)\log_{1+\epsilon}((1+\epsilon)/\delta)$ units of processing are assigned to it. In other words, scaling down all $p^t$ flows by $\log_{1+\epsilon}(1+\epsilon)/\delta $ will result in a feasible flow. Letting $p' = \bar{p}/\log_{1+\epsilon}\frac{1+\epsilon}{\delta}$, we get 

\begin{align*}
 A(p') / A(p^{opt}) & \geq A(\bar{p})/ \left(A(p^{opt}) \log_{1+\epsilon}\frac{1+\epsilon}{\delta}\right) \\
 & \geq \frac{\ln (1/(|E|\cdot\delta))}{\epsilon}/\log_{1+\epsilon}\frac{1+\epsilon}{\delta} \end{align*}

Taking $\delta = (1+\epsilon) ((1+\epsilon)m)^{-1/\epsilon}$, we have that 
$$ \frac{A(p') }{ A(p^{opt})} \ge (1-\epsilon),$$

giving the promised approximation factor. 

Note that in each iteration, we either increase the weight of one $w_{\hat{e}}$ by a factor of $(1+\epsilon)$, or increase all of the $q_{\hat{v}}$'s on a path $\pi^t$ by a factor of $(1+\epsilon)$. Since each $w_{\hat{e}}$ and each $q_{\hat{v}}$ can only be increased by such a factor at most $\frac{\ln 1/(|E|\cdot \delta)}{\epsilon}$ times before its weight exceeds $1$, the total running time $T$ is bounded by $(|V| + |E|)\frac{\ln 1/(|E| \cdot \delta)}{\epsilon} \cdot T_{sp} = O(|E|\log |V|/\epsilon^2 \cdot T_{sp})$, where $T_{sp}$ is the time it takes \Cref{alg:optpath} to compute the optimal path for each the $|D|$ flows. As a single flow takes time $O(|E| + |V|\log V)$ using Fibonacci heaps, we can compute the \twalk for each of the flows in time $O(|D| \cdot(|E| + |V| \log |V|))$. Thus, the total running time is $O(|D| \cdot |E| \cdot (|E| + |V| \log |V|) \cdot \log^2 |V| / \epsilon^2)$.

\section{Evaluations}
We ran several experiments to address the following: (i) how well the LP fare against ``naive'' algorithms, and (ii) what is the runtime for an edge-based LP solution?

\subsection{Throughput Improvement}
To determine how well the LP fares against simple approaches, we compare it to a ``naive'' algorithm that first routes flow without vertex capacities in mind, and then processes as much flow as possible on the flow paths it initially routed. This is a variant of the \textit{path-selection} approach used in \cite{heorhiadi2015accelerating}. While there are simple examples where the naive algorithm performs extremely poorly in theory, we seek to study the performance in practice.

\begin{figure} [ht]
	\centering
	\begin{subfigure}{.47\textwidth}
		\hspace{-1in}
		\,\,\,\,\,\,\,\,\,\,\,\,\,\,\,\,\,\,\,\includegraphics[width=1.4\linewidth]{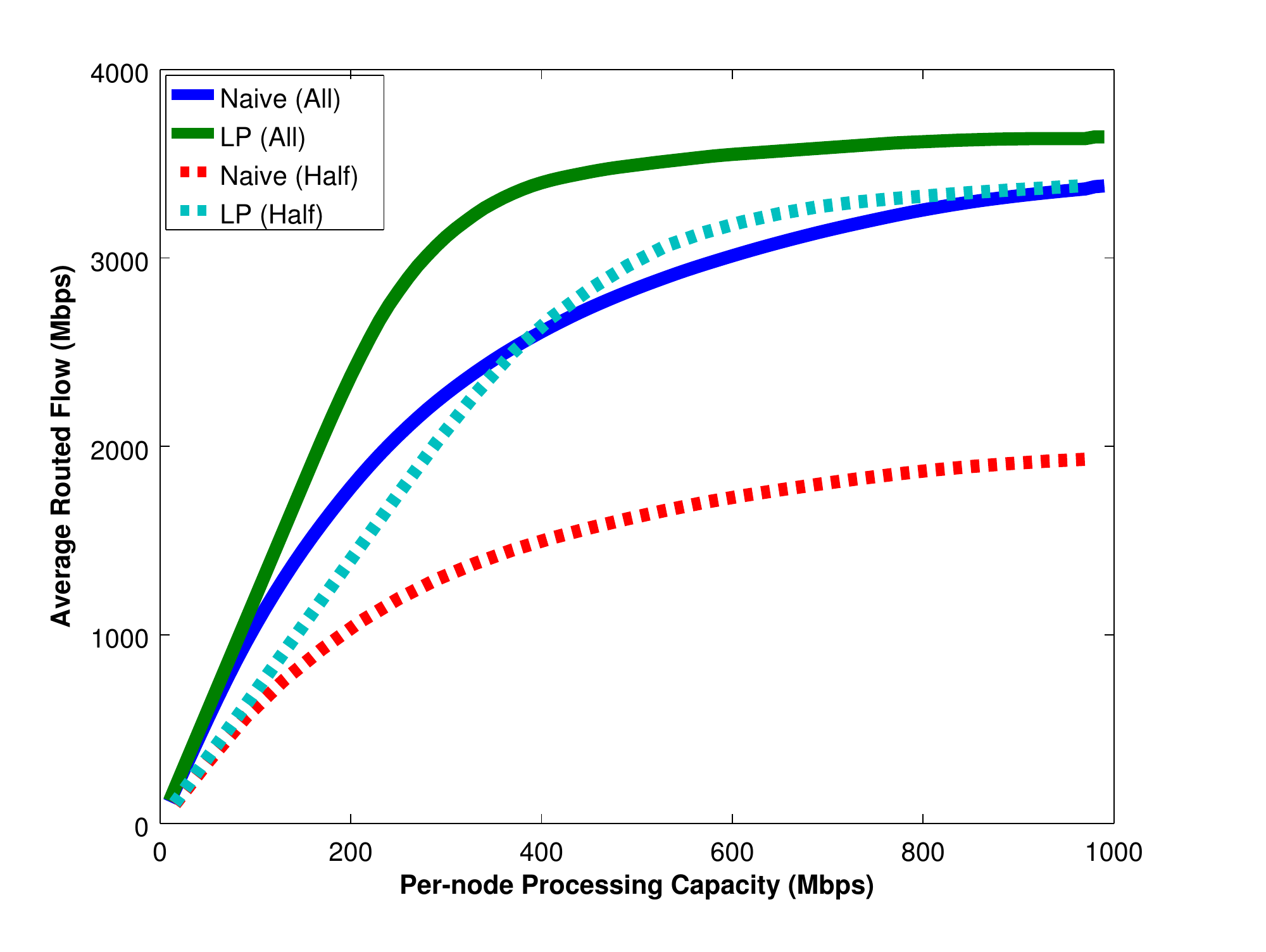}
		\caption{Amount of flow that the two algorithms could process given various node processing capacities.}
		\label{fig:expabsolute}
	\end{subfigure} \hspace{.2in}
	\begin{subfigure}{.47\textwidth}
		\hspace{-1in}
		\,\,\,\,\,\,\,\,\,\,\,\,\,\,\,\,\,\,\,\,\,\,\,\,\,\,\,\,\,
		\includegraphics[width=1.4\linewidth]{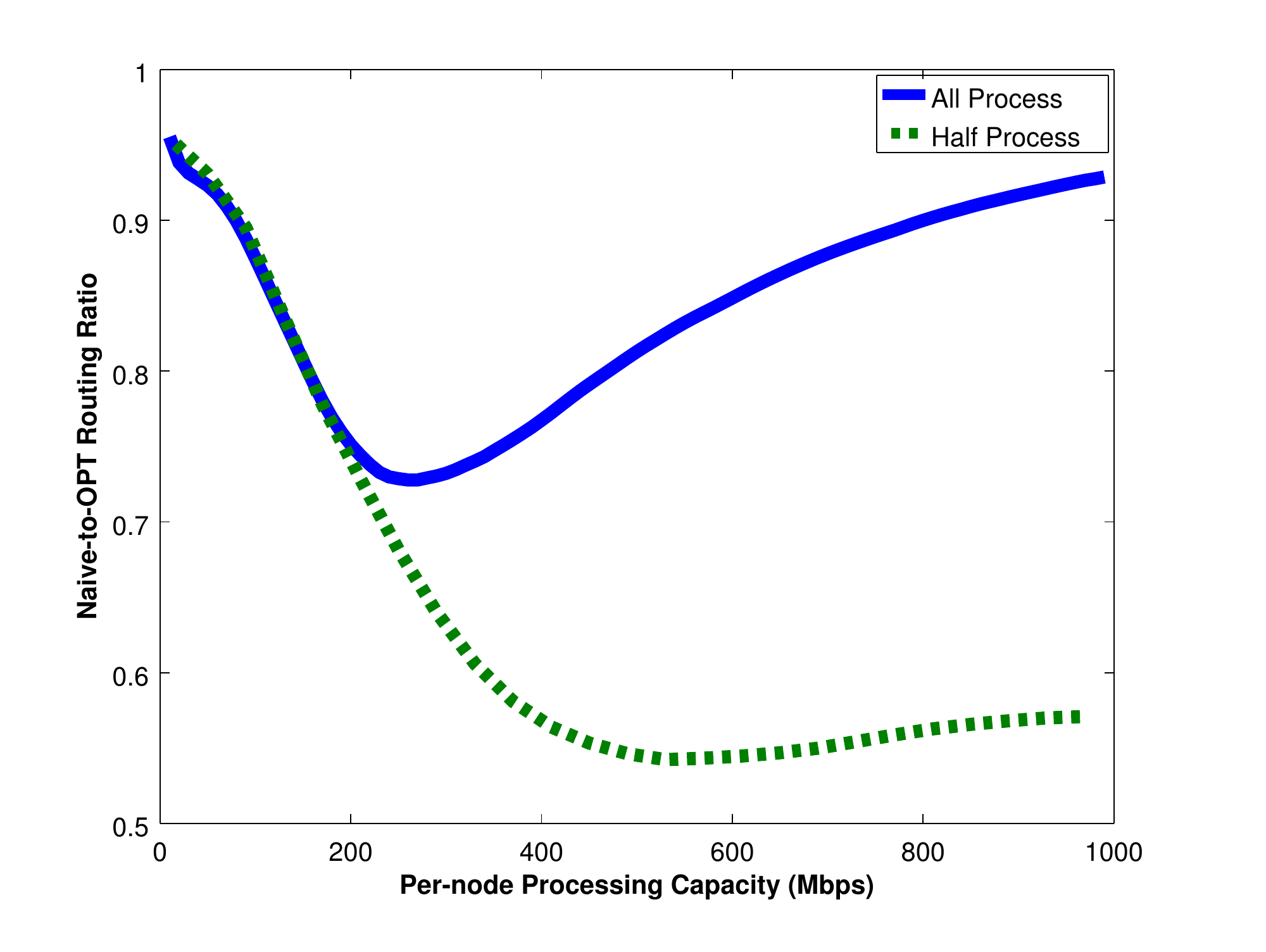}
		\caption{The ratio of the demanded flow processable by the naive algorithm to that processable by the LP, plotted at various processing capacities.}
		\label{fig:expratio}
	\end{subfigure}
	\caption{Experimental results showing how much of the demand both the naive and the (optimal) LP-based algorithm could successfully route and process given the Abilene traffic matrices.}
	\label{fig:experiments}
\end{figure}

We ran both algorithms on 150 randomly sampled traffic matrices provided by the TOTEM project \cite{uhlig2006providing} for the Abilene network in 2004.  As these datasets don't include vertex processing capacities, we compared the two algorithms on a wide range of values, with processing capacities assigned according to one of two distributions: either they all have the same capacity (the \textit{all} case) or exactly half of them have the prescribed capacity and the other half have zero (the \textit{half} case). The results are diagrammed in \Cref{fig:experiments}.

Experimental analysis shows that while the LP and the ``naive'' algorithm fare similarly when the network is low on processing capacity and thus node-throttled, or, in the \textit{all} case, high on node capacity and thus bottlenecked by the link capacities and the demand itself, the LP has a distinct advantage in between the two extremes when either resource could become the bottleneck when the flows are not routed efficiently. Additionally, the experiments show that the naive algorithm suffers when processing  is not uniformly distributed among the nodes even in the high-capacity case, as many of the initial flow paths might go entirely through nodes without any processing capacity and thus fail to get processed. Our experiments show that using the exact algorithm gives an improvement of up to $30\%$ over the naive approach if processing power is available at all nodes, and up to $80\%$ if the processing power is only placed at half of the nodes. 

\subsection{Runtime Analysis}
Although the edge-based LP provides a polynomial running-time guarantee,  it may still be too slow in practice on large graphs. To study the performance of linear programming, we also run the LP solver over a number of topologies acquired from SNDLib \cite{orlowski2010sndlib}. For each of the topologies in \Cref{tab:runtime}, enough processing capacity was evenly distributed among a random sample of half of all nodes so that the total processing capacity equals half of the total demand. As the table shows, the time to solve the LP grows quickly with the input size, and the solver ends up taking nearly two minutes for instances with 37 nodes. 

\begin{table}
\centering
\begin{tabular}{|l|c|c|c|}
     \hline
     \textbf{Network}& $|V|$ & $|E|$ & \textbf{Time (sec)}  \\ \hline
     \textbf{abilene} & 12 & 15 & 1.91 \\ \hline
     \textbf{dfn-bwin} & 10 & 45 & 3.08 \\ \hline
     \textbf{atlanta} & 15 & 22 & 5.28 \\ \hline
     \textbf{dfn-gwin} & 11 & 47 & 13.91 \\ \hline
     \textbf{geant} & 22 & 36 & 23.69 \\ \hline
     \textbf{france} & 25 & 45 & 44.38 \\ \hline
     \textbf{india35} & 35 & 57 & 105.89 \\ \hline
	\end{tabular}
	\caption{Time to solve the edge based LP for various topologies. All values are averaged over 15 runs of CoinLP on a 3.3GHz Intel i5 2500k processor.}
	\label{tab:runtime}
\end{table}

The cost of solving this LP even on small topologies justifies the use of the faster multiplicative weight algorithm instead. The MWU algorithm has a running time of roughly $\tilde{O}(|D| * |E|^2 /\epsilon^2 )$, which on sparse graphs is roughly equal to just the number of variables in the edge-based LP (as opposed to the time needed to actually solve it). While the algorithm is only approximately optimal, choosing an appropriate value of $\epsilon$ (say, $\epsilon = 0.1$) can grant a better running time while still significantly outperforming the naive algorithm.

\section{Middlebox Node Purchase Optimization}
\label{sec:networkdesign}
We now discuss the network design problems mentioned in the introduction. Although such problems can be modeled in multiple ways, we limit our discussion to the case where each vertex $v$ has a potential processing capacity $C$, which can only be utilized if $v$ is ``purchased''. Flow processed elsewhere can be routed through $v$ regardless of whether or not $v$ is purchased. 
\begin{enumerate}
	\item The \textit{minimization} version of the problem (\minmp), where the goal is to pick the smallest set of vertices such that all flow is routable.
	\item The \textit{maximization} version of the problem (\maxmp), where we try to maximize the amount of routable flow while subject to a budget constraint of $k$. 
\end{enumerate} 
Formally, the input to \minmp is a graph $G = (V,E)$, which can be either directed or undirected, with nonnegative costs $q_v$ on its vertices, a potential processing capacity $C : V \rightarrow \nonnegativereal$, and a collection of $(s_i, t_i)$ pairs with demands $R_i$. The goal is to select a set $T \subseteq V$ of vertices such that all demands are satisfied.  \maxmp is given the same collection of inputs along with a budget integer $k$, and the goal is to route as much of the demand as possible.

All four problems (maximization or minimization, directed or undirected), are \nph. We present approximation algorithms and hardness results for each version of the problem, as well as for some restricted variants. Our results are summarized in \Cref{tab:netdesign}.

\begin{figure}
\begin{minipage}{\linewidth}
\captionsetup[table]{aboveskip=2pt}
\captionsetup[table]{belowskip=2pt}
\centering
\captionof{table}{Network Design Results}\label{tab:netdesign} 
\footnotesize
\begin{tabular}{|r|r|c|c|}
\hline
 & & \bf{Directed} & \bf{Undirected} \\ \hline
 \multirow{2}{*}{\bf Budgeted}& \bf{Approximation}& $\Omega(1/\log n)$ & $.078^{(\dagger)}$ \\
 & \bf{Hardness} & $1 - 1/e - \epsilon$ & $.999$ \\ \hline
 \multirow{2}{*}{\bf Minimization}& \bf{Approximation}& $O(\log n)^{(*)}$ & $O(\log n)^{(*)}$ \\
 & \bf{Hardness} & $O(\log n)$ & $2-\epsilon$ \\ \hline
\end{tabular} \\
\normalsize
{\small 
$^{*}$ All demands are satisfied only up to an $(1-\epsilon)$ fraction. \\
$^{\dagger}$ Assuming $1$ source-sink pair. For multiple pairs, we adapt the $\Omega(1/\log n)$-approximation digraph algorithm.
}

\end{minipage}
\end{figure}

\subsection{Bicriterion Approximation Algorithm for Directed and Undirected \minmp}
\label{dirminalg}
We first describe an algorithm for directed \minmp that satisfies all flow requirements up to a factor of  $1-\delta$ fraction with expected cost bounded by $O(\log n / \delta^2)$ times the optimum.

We begin our  approximation algorithm for directed \minmp by modifying the \twalk-based LP formulation with additional 
variables $x_v$ corresponding to whether or not processing capacity at vertex $v$ has been purchased.
We further give a polynomial sized edge-based LP formulation with flow variables $f_i^{1,v}(e)$ and $f_i^{2,v}(e)$
for each commodity $i$, each vertex $v \in V$ and each edge $e \in E$. The variables $f_i^{1,v}(e)$ correspond
to the (processed) commodity $i$ flow that has been processed by vertex $v$: these variables describe a flow 
from $v$ to $t_i$. The variables $f_i^{2,v}(e)$ correspond
to the (unprocessed) commodity $i$ flow that will be processed by vertex $v$: these variables describe a flow 
from $s_i$ to $v$. 

\noindent{}
\begin{minipage}[t]{0.45\textwidth}
\textit{\Twalk-based formulation:}
\footnotesize
 \begin{subequations}
 \footnotesize
\begin{align*}
&\textsc{minimize } \sum_{v \in V} q_v x_v\\
&\textsc{subject to} \\
&x_v \leq 1 &\forall v \in V\\
& p_{i,\pi} = \sum \limits_{v\in \pi} p_{i,\pi}^v& \hspace{-.35in}\forall i \in [|D|], \pi \in P \\
& \sum_{\pi \in P}p_{i,\pi} \geq R_i &\forall i\in[|D|]\\
& \sum\limits_{i=1}^{|D|}\sum \limits_{\stackrel{\pi\in P}{\pi \ni e}} p_{i,\pi} \leq B(e) & \forall e \in E\\
&\sum\limits_{i=1}^{|D|} \sum \limits_{\pi\in P} p_{i, \pi}^v \leq C(v) x_v &\forall v \in V \\
& \sum\limits_{i=1}^{|D|}\sum\limits_{\stackrel{\pi\in P}{\pi \ni e}} p_{i,\pi}^v \leq B(e) x_v &\forall e \in E, v \in V\\
& \sum\limits_{\pi\in P} p_{i,\pi}^v \leq R_i x_v &\hspace{-.2in}\forall i\in[|D|], v \in V,\\
&p_{i,\pi}^v \geq 0 & \hspace{-.55in}\forall i\in[|D|],\pi\in P, v \in \pi\\
&x_v \geq 0 &\forall v \in V
\end{align*}
\end{subequations}

 \end{minipage}
 \hspace{-.22in}
\begin{minipage}[t]{0.55\textwidth}
\textit{\Twalk-based formulation:}
\footnotesize
\begin{subequations}
\footnotesize
\begin{align*}
&\textsc{minimize } \sum_{v \in V} q_v x_v\\
&\textsc{Subject to}\\
&x_v \leq 1 &\forall v \in V\\
&\sum\limits_{e \in \delta^-(u)}  f_i^{j,v}(e)=  \sum\limits_{e \in \delta^+(u)} f_i^{j,v}(e) \hspace{-2in}
& \\
& & \hspace{-1.5in}\forall i \in [|D|], j \in \{1,2\}, v \in V,\forall u \in V \setminus \{s_i,t_i,v\}\\
&\sum\limits_{e \in \delta^-(v)}  f_i^{2,v}(e)=  \sum\limits_{e \in \delta^+(v)} f_i^{1,v}(e) 
&\hspace{-2in}\forall i \in [|D|], v \in V\\
&\sum\limits_{v \in V} \sum\limits_{e \in \delta^+(s_i)} f_i^{2,v}(e) \geq R_i &\forall i \in [|D|]\\
&\sum\limits_{i=1}^{|D|} \sum\limits_{v \in V} (f_i^{1,v}(e) + f_i^{2,v}(e))\leq B(e) &\forall e \in E\\
&\sum\limits_{i=1}^{|D|}  \sum\limits_{e \in \delta^-(v)}  f_i^{2,v}(e) \leq C(v)x_v &\forall v \in V\\
&\sum\limits_{i=1}^{|D|} (f_i^{1,v}(e) + f_i^{2,v}(e)) \leq B(e) x_v \hspace{-2in}&\forall e \in E, v \in V\\
&\sum\limits_{e \in \delta^+(s_i)} f_i^{2,v}(e) \leq R_i x_v &\hspace{-2in}\forall i \in [|D|], v \in V\\
&f_i^{2,v}(e)= 0&\hspace{-2in}\forall i \in [|D|], v \in V, e \in \delta^-(s_i) \\
&f_i^{1,v}(e)= 0&\hspace{-2in}\forall i \in [|D|], v \in V, e \in \delta^+(t_i) \\
&p_i^{1,v}(e), p_i^{2,v}(e), x_v \geq 0&\hspace{-2in}\forall i \in [|D|], v \in V, e \in E 
\end{align*}
\end{subequations}
\normalsize
\end{minipage}
 \ \\

\normalsize
Given an optimal solution to this LP, we pick vertices to install processing capacity on by randomized rounding:
pick vertex $v$ with probability $x_v$. if $x_v$ is picked, then all flows processed by $v$ are rounded up in the
following way: $\hat{F}_i^{j,v}(e) = f_{i}^{j,v}(e)/x_v$ for all $i \in [|D|], j \in \{1,2\}, e \in E$. If $v$ is not picked, then all flows processed by $v$ are set to zero, i.e. $\hat{F}_i^{j,v}(e) = 0$.

By design, $E[\hat{F}_i^{j,v}(e) ] = f_{i}^{j,v}(e)$.
In the solution produced by the rounding algorithm, the total flow through edge $e$ is  
$\displaystyle \sum_{v \in V} \sum_{i=1}^{|D|} ((\hat{F}_i^{1,v}(e) + \hat{F}_i^{2,v}(e))$.
This is a random variable whose expectation is at most $B(e)$, 
and is the sum of independent random variables, one for each vertex $v$.
The constraints of the LP ensure that if $v$ is selected, then the total processing done by vertex $v$ is at most $C(v)$. 
Further, the total contribution of vertex $v$ to the flow on edge $e$ 
does not exceed the capacity $B(e)$, i.e.
$\displaystyle \sum_{i=1}^{|D|} (\hat{F}_i^{1,v}(e) + \hat{F}_i^{2,v}(e)) \leq B(e)$.
Also, the total contribution of vertex $v$ to the commodity $i$ flow is at most $R_i$, i.e.
$\displaystyle \sum_{e \in \delta^+(s_i)} \hat{F}_i^{2,v}(e) \leq R_i$.

We repeat this randomized rounding process $t=O(\log(n)/\epsilon^2)$ times.
Let $g^k(e)$ denote the total flow along edge $e$, and $h^k_i$ denote the total amount of commodity $i$ flow 
in the solution produced by the $k$th round of the randomized rounding process.
The following lemma follows easily by Chernoff-Hoeffding bounds:

\begin{lemma}
\begin{align}
Pr\left[\sum_{k=1}^t g^k(e) \geq (1+\epsilon)t \cdot B(e)\right] &\leq e^{-t \epsilon^2/3} &\forall e \in E\\
Pr\left[\sum_{k=1}^t h^k_i \leq (1-\epsilon) t \cdot R_i \right]  &\leq e^{-t \epsilon^2/2} &\forall i \in [|D|]
\end{align}
\end{lemma}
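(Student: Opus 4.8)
The statement to prove consists of two probabilistic inequalities bounding the deviation of sums of rounded quantities. The plan is to set up, for each fixed edge $e$ and each fixed commodity $i$, a collection of independent random variables indexed by the vertex set $V$, one variable per vertex $v$, capturing the contribution of $v$ (when purchased) to the flow on $e$ or to the routing of commodity $i$. The randomness in round $k$ is the independent choice, for each $v$, of whether $x_v$ is "hit"; since the rounding in distinct rounds $k = 1, \dots, t$ uses fresh independent coins, the $t$ rounds are mutually independent as well. So $\sum_{k=1}^t g^k(e)$ is a sum of $t \cdot |V|$ independent random variables, and likewise $\sum_{k=1}^t h^k_i$.

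The key steps, in order, would be: first, verify that within a single round $E[g^k(e)] \le B(e)$ and $E[h^k_i] \ge R_i$ — this is exactly the linearity-of-expectation observation already made in the text, namely $E[\hat F_i^{j,v}(e)] = f_i^{j,v}(e)$ together with the LP feasibility constraints $\sum_{v,i}(f_i^{1,v}(e)+f_i^{2,v}(e)) \le B(e)$ and $\sum_v \sum_{e \in \delta^+(s_i)} f_i^{2,v}(e) \ge R_i$. Second, identify the boundedness/scaling of each summand: the text already records that if $v$ is purchased then its contribution to edge $e$ satisfies $\sum_i (\hat F_i^{1,v}(e) + \hat F_i^{2,v}(e)) \le B(e)$, and its contribution to commodity $i$ satisfies $\sum_{e \in \delta^+(s_i)} \hat F_i^{2,v}(e) \le R_i$. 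Thus each per-vertex summand lies in $[0, B(e)]$ (resp. $[0, R_i]$), so after normalizing by $B(e)$ (resp. $R_i$) we have independent random variables in $[0,1]$ with total expectation at most $t$ (resp. at least $t$). Third, apply the standard multiplicative Chernoff–Hoeffding bounds for sums of independent $[0,1]$-valued random variables: the upper-tail bound $\Pr[X \ge (1+\epsilon)\mu] \le e^{-\mu\epsilon^2/3}$ with $\mu \le t$ for the edge inequality, and the lower-tail bound $\Pr[X \le (1-\epsilon)\mu] \le e^{-\mu\epsilon^2/2}$ with $\mu \ge t$ for the demand inequality, yielding exactly the claimed $e^{-t\epsilon^2/3}$ and $e^{-t\epsilon^2/2}$. (One should note the harmless monotonicity subtlety: for the upper tail with $\mu$ possibly strictly less than $t$, $\Pr[X \ge (1+\epsilon)t] \le \Pr[X \ge (1+\epsilon)\mu] \le e^{-\mu\epsilon^2/3} \le e^{-t\epsilon^2/3}$ requires $\mu$ in the exponent, and since $e^{-\mu \epsilon^2/3}$ is decreasing in... actually $\mu \le t$ makes $e^{-\mu\epsilon^2/3} \ge e^{-t\epsilon^2/3}$, so one instead uses the additive Hoeffding form or the version of the Chernoff bound stated with the target threshold; I would just invoke the clean statement $\Pr[X \ge (1+\epsilon) t] \le e^{-t\epsilon^2/3}$ valid whenever $E[X] \le t$, which is standard.)

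The main obstacle — really the only non-bookkeeping point — is justifying the independence structure cleanly: one must argue that the contribution of vertex $v$ to $g^k(e)$ is a deterministic function of the single coin flip for $v$ in round $k$, so that these are genuinely independent across $v$ and across $k$. This is immediate from the rounding rule ($\hat F_i^{j,v}$ depends only on whether $v$ was picked), but it is worth stating explicitly because the flows $f_i^{j,v}$ for different $v$ share edges, which might superficially suggest dependence; the point is that the \emph{scaling factor} applied is per-vertex and the coins are per-vertex. After that, everything is a direct citation of the standard Chernoff–Hoeffding inequalities, so I would keep the proof to a few lines: name the independent per-vertex summands, cite expectation and boundedness from the LP constraints already displayed, and invoke the two tail bounds.

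\begin{proof}[Proof sketch]
Fix an edge $e$ (resp.\ a commodity $i$). For round $k$ and vertex $v$, let $X^k_v$ be the contribution of $v$ to the flow on $e$ in round $k$, i.e.\ $X^k_v = \sum_{i=1}^{|D|}(\hat F_i^{1,v}(e) + \hat F_i^{2,v}(e))$ as produced by the $k$th rounding (and analogously $Y^k_v = \sum_{e \in \delta^+(s_i)} \hat F_i^{2,v}(e)$ for the demand bound). Since $\hat F_i^{j,v}(e)$ is determined by whether the single independent coin for $v$ in round $k$ selected $v$, the variables $\{X^k_v\}_{v \in V,\, k \in [t]}$ are mutually independent; moreover $0 \le X^k_v \le B(e)$ by the per-vertex edge-capacity property noted above, and $E[X^k_v] = \sum_i(f_i^{1,v}(e)+f_i^{2,v}(e))$, so $\sum_{k,v} E[X^k_v] = t \sum_{v,i}(f_i^{1,v}(e)+f_i^{2,v}(e)) \le t\, B(e)$ by LP feasibility. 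Writing $\sum_{k=1}^t g^k(e) = \sum_{k,v} X^k_v$ and normalizing by $B(e)$, the multiplicative Chernoff bound for sums of independent $[0,1]$ variables with mean at most $t$ gives $\Pr[\sum_{k} g^k(e) \ge (1+\epsilon) t\, B(e)] \le e^{-t\epsilon^2/3}$. Symmetrically, $0 \le Y^k_v \le R_i$, $\sum_{k,v} E[Y^k_v] = t \sum_v \sum_{e \in \delta^+(s_i)} f_i^{2,v}(e) \ge t\, R_i$, and the lower-tail Chernoff bound yields $\Pr[\sum_k h^k_i \le (1-\epsilon) t\, R_i] \le e^{-t\epsilon^2/2}$.
\end{proof}
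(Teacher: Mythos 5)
Your proof is correct and follows essentially the same route as the paper: the paper simply asserts the lemma ``follows easily by Chernoff--Hoeffding bounds,'' relying on the per-vertex independence, the expectation identities $E[\hat{F}_i^{j,v}(e)] = f_i^{j,v}(e)$, and the per-vertex boundedness ($\leq B(e)$, resp.\ $\leq R_i$) established in the preceding paragraph, exactly the ingredients you assemble. Your explicit handling of the monotonicity subtlety (using the Chernoff form valid when the mean is only bounded by $t$ rather than equal to it) is a welcome bit of care that the paper leaves implicit.
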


We set $t = O(\log(n)/\epsilon^2)$ so that the above probabilities are at most $1/n^3$ for each edge $e \in E$ and
each commodity $i$. With high probability, none of the associated events occurs.
The final solution is constructed as follows:
A vertex is purchased if it is selected in any of the $t$ rounds of randomized rounding.
Thus the expected cost of the solution is at most $t = O(\log(n)/\epsilon^2)$ times the LP optimum.
We consider the superposition of all flows produced by the $t$ solutions and scale down the
sum by $t(1+\epsilon)$.
This ensures that the capacity constraints are satisfied.
Note that the vertex processing constraints are also satisfied by the scaled solution.
The total amount of commodity $i$ flow is at least $\frac{1-\epsilon}{1+\epsilon} R_i \geq (1-2\epsilon) R_i$.
Hence we get the following result:
\begin{theorem}
\label{thm:directedminalg}
For directed \minmp, there is a polynomial time randomized algorithm that satisfies all flow requirements up to factor $1-\delta$ and produces 
a solution that respects all capacities, with expected cost bounded by $O(\log(n)/\delta^2)$ times the optimal cost.
\end{theorem}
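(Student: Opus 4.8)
The plan is to prove \Cref{thm:directedminalg} by LP relaxation followed by iterated randomized rounding and a Chernoff/union-bound argument. First I would argue that the polynomially-sized edge-based formulation displayed above is a genuine relaxation of directed \minmp: for a fixed integral purchased set $T = \{v : x_v = 1\}$, its constraints reduce to a feasibility LP for routing and processing every commodity, and the per-vertex flow pair $(f_i^{1,v}, f_i^{2,v})$ encodes exactly the portion of commodity $i$ that is processed at $v$ — the $f^{2,v}$-variables carry a flow from $s_i$ to $v$, the $f^{1,v}$-variables a flow from $v$ to $t_i$, and the coupling constraint $\sum_{e \in \delta^-(v)} f_i^{2,v}(e) = \sum_{e \in \delta^+(v)} f_i^{1,v}(e)$ realizes the processing step as the join at $v$. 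This is the direct analogue of \Cref{thm:flowmax}: applying the loop-removal and \twalk-extraction machinery of \Cref{subsec:flowmax} separately to each $(i,v)$-indexed flow pair decomposes any LP solution, commodity by commodity and then $v$ by $v$, into flow \twalks{} of the form $s_i \leadsto v \leadsto t_i$, so feasibility of the edge-based LP is equivalent to feasibility of the \twalk-based LP. I would also note that the extra inequalities $\sum_i (f_i^{1,v}(e) + f_i^{2,v}(e)) \leq B(e) x_v$ and $\sum_{e \in \delta^+(s_i)} f_i^{2,v}(e) \leq R_i x_v$ are redundant in the integral program but are precisely the valid inequalities that make the rounding lossless per vertex.

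Next I would analyze one round of rounding. Independently for each $v$, purchase $v$ with probability $x_v$; if purchased, rescale every flow routed through $v$ by $1/x_v$ (set $\hat F_i^{j,v}(e) = f_i^{j,v}(e)/x_v$), and otherwise zero out all flow indexed by $v$. Linearity gives $\mathbb{E}[\hat F_i^{j,v}(e)] = f_i^{j,v}(e)$, so in expectation the rounded solution matches the LP solution on every edge and every commodity. The scaled constraints above guarantee that, conditioned on $v$ being purchased, vertex $v$'s contribution to any edge load is at most $B(e)$, its contribution to commodity $i$ is at most $R_i$, and its processing load is at most $C(v)$. Hence the edge load $\sum_{v}\sum_i (\hat F_i^{1,v}(e) + \hat F_i^{2,v}(e))$ is a sum of $|V|$ independent random variables, each in $[0, B(e)]$, with total expectation at most $B(e)$; similarly the commodity-$i$ total $h_i = \sum_v \sum_{e \in \delta^+(s_i)} \hat F_i^{2,v}(e)$ is a sum of independent $[0,R_i]$-valued variables with expectation at least $R_i$ (invoking the LP demand constraint).

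I would then run $t = O(\log n / \epsilon^2)$ independent rounds, purchasing any vertex chosen in \emph{any} round and superposing the $t$ rounded flows. The Chernoff--Hoeffding bounds in the lemma preceding the theorem give $\Pr[\sum_k g^k(e) \geq (1+\epsilon) t B(e)] \leq e^{-t\epsilon^2/3}$ and $\Pr[\sum_k h_i^k \leq (1-\epsilon) t R_i] \leq e^{-t\epsilon^2/2}$; choosing the constant in $t$ so each is inverse-polynomial and union-bounding over the $O(|E| + |D|)$ events shows that with high probability every $t$-round edge load is at most $(1+\epsilon) t B(e)$ and every commodity total is at least $(1-\epsilon) t R_i$. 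Scaling the superposed flow down by $t(1+\epsilon)$ yields a flow respecting every edge capacity; the vertex processing constraints survive since each purchased vertex processes at most $C(v)$ per round, and after scaling the flow routes at least $\tfrac{1-\epsilon}{1+\epsilon} R_i \geq (1 - 2\epsilon) R_i$ of each demand, so taking $\delta = 2\epsilon$ gives the $1 - \delta$ guarantee. For the cost: vertex $v$ is purchased with probability at most $t x_v$ by a union bound over rounds, so $\mathbb{E}[\text{cost}] \leq t \sum_v q_v x_v = O(\log n / \epsilon^2) \cdot \mathrm{OPT}_{LP} \leq O(\log n / \delta^2) \cdot \mathrm{OPT}$.

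The step I expect to be the main obstacle is the first one — establishing that the edge-based LP is correctly a relaxation of \minmp, i.e.\ that its feasible solutions decompose into the $s_i \leadsto v \leadsto t_i$ \twalks{} with the claimed processing semantics, and that the capacity-splitting constraints ($B(e)x_v$, $C(v)x_v$, $R_i x_v$) are consistent with an integral purchased set. Once that per-vertex decomposition is in hand, the rounding, the independence structure across vertices, and the concentration/union-bound wrap-up are a fairly standard randomized-rounding argument.
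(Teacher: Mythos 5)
Your proposal is correct and follows essentially the same route as the paper: the same per-vertex flow variables $f_i^{1,v}, f_i^{2,v}$ with $x_v$-scaled capacity constraints, the same independent rounding with $1/x_v$ rescaling, $t = O(\log n/\epsilon^2)$ repetitions, Chernoff--Hoeffding plus a union bound, superposition scaled down by $t(1+\epsilon)$, and the expected-cost bound of $t$ times the LP optimum. The extra care you devote to justifying that the edge-based LP is a faithful relaxation (via per-$(i,v)$ decomposition into $s_i \leadsto v \leadsto t_i$ \twalks{}) is left implicit in the paper but is consistent with it.
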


We can modify the LP to simulate the inclusion of an undirected edge with capacity $B(e)$ by adding the constraints for two arcs between its endpoints with capacity $B(e)$ each, as well as an additional constraint requiring that the sum of flows over these two arcs is bounded by $B(e)$. The analysis done above carries through line-by-line, giving the following result.

\begin{theorem}
\label{thm:undirectedminalg}
For undirected \minmp, there is a polynomial time randomized algorithm that satisfies all flow requirements up to factor $1-\delta$ and produces 
a solution that respects all capacities, with expected cost bounded by $O(\log(n)/\delta^2)$ times the optimal cost.
\end{theorem}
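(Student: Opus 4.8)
The plan is to reduce the undirected case to the directed analysis of \Cref{thm:directedminalg} via an arc-doubling gadget, and then to verify that every step of that argument survives the reduction. First I would replace each undirected edge $e = \{u,w\}$ with a pair of antiparallel arcs $(u,w)$ and $(w,u)$, each carrying its own flow variables $f_i^{j,v}(\cdot)$ and each with capacity $B(e)$; I then add one extra linear constraint per original edge stating that the combined flow on the two arcs, summed over commodities, over $j\in\{1,2\}$, and over processing nodes $v$, is at most $B(e)$, together with the analogous per-vertex coupling $\sum_{i}\sum_{j}\sum_{\text{both arcs of }e} f_i^{j,v} \le B(e)\,x_v$. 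These are exactly the constraints flagged in the paragraph preceding \Cref{thm:undirectedminalg}. The resulting program is still polynomially sized and edge-based, and by the same decomposition argument as in \Cref{thm:flowmax} (a \twalk is already permitted to traverse an undirected edge in both directions, which is precisely what the antiparallel pair models) its feasible solutions are in cost-preserving correspondence with feasible fractional purchase-and-route patterns on the undirected instance.

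Next I would run the identical randomized rounding: independently include each $v$ with probability $x_v$, and on inclusion rescale every flow routed through $v$ by $1/x_v$ (zeroing it out otherwise). The point to check is that each quantity controlled by a Chernoff--Hoeffding bound is still a sum of independent per-vertex contributions, each bounded by the relevant capacity: the per-vertex processing bound $C(v)$ is untouched by the gadget; the per-vertex, per-commodity bound $R_i$ is untouched; and the per-vertex contribution to the \emph{undirected} capacity of $e$ is bounded by $B(e)$ exactly because of the new coupling constraint over both arcs. Hence the same two inequalities as in the Chernoff lemma hold, now with the edge event phrased in terms of the coupled undirected-capacity sum.

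The remainder is then verbatim: repeat the rounding $t = O(\log n/\epsilon^2)$ times so that each of the $|E|$ edge events and $|D|$ demand events fails with probability at most $1/n^3$; purchase any vertex selected in any round, so the expected cost is at most $t$ times the LP optimum, i.e.\ $O(\log n/\epsilon^2)\cdot\mathrm{OPT}$; superpose the $t$ routings and scale the sum down by $t(1+\epsilon)$, which makes every edge and vertex capacity satisfied while leaving at least $\frac{1-\epsilon}{1+\epsilon}R_i \ge (1-2\epsilon)R_i$ of each demand; finally set $\epsilon = \delta/2$ to obtain the $(1-\delta)$ routing guarantee and the $O(\log n/\delta^2)$ cost guarantee. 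I expect the only genuine obstacle to be the bookkeeping for the two-arc gadget --- confirming that the \twalk decomposition of \Cref{thm:flowmax} and the per-vertex independence structure of the rounding both remain intact once an edge's capacity is split across an antiparallel pair and then re-coupled; after that, the inequalities and the choice of $t$ are identical to the directed proof.
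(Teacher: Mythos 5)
Your proposal matches the paper's own argument: the paper likewise simulates each undirected edge by two antiparallel arcs of capacity $B(e)$ plus a coupling constraint bounding their total flow by $B(e)$, and then invokes the directed rounding analysis of \Cref{thm:directedminalg} ``line-by-line.'' Your extra bookkeeping --- the per-vertex coupled constraint with $x_v$ and the check that each per-vertex contribution stays within $B(e)$, $C(v)$, $R_i$ so the Chernoff--Hoeffding step still applies --- is exactly the verification the paper leaves implicit, so this is the same proof, just spelled out more carefully.
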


\subsection{Approximation Hardness for Directed \minmp}

\begin{figure}[t]
\centering
 \includegraphics[height=1.0in]{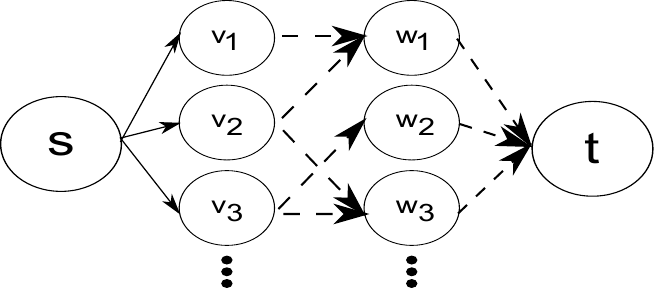}
 \caption{Approximation-preserving reduction from {\sc Set Cover} and {\sc Max $k$-Coverage} to directed \minmp and directed \maxmp. Solid edges have infinite capacity, dashed edges have capacity $1$. $v_i$ vertices have infinite processing potential, at a cost of $1$ each.}
\label{fig:coverhardness}
\end{figure}

We now prove that directed \minmp is NP-hard to approximate to a factor better than $(1-\epsilon)\ln n$ by showing an approximation-preserving reduction from \textsc{Set Cover}, a problem already known to have the aforementioned $(1-\epsilon)\ln n$ hardness \cite{DinurSteurer14}.

Given a \textsc{Set Cover} instance with set system $\mathcal{S} = \{S_1, S_2, \cdots\}$ and universe of elements $\mathcal{U}$, we create one vertex $v_S$ for each $S \in \mathcal{S}$ and one vertex $w_u$ for each $u \in \mathcal{U}$. Further, we create one source vertex $s$ and one sink vertex $t$, where $t$ demands $|\mathcal{U}|$ units of processed flow from $s$. We add one capacity-$n$ arc from $s$ to each $v_S$, and one capacity-$1$ arc from each $w_u$ to $t$. We then add a capacity-$1$ arc from each $v_S$ to $w_u$ whenever $S \ni u$. Finally, we give each $v_S$ vertex $n$ units of processing capacity at a cost of $1$ each.

In order for $t$ to get $|\mathcal{U}|$ units of flow, each $w_u$ must get at least one unit of processed flow itself. Thus, at least one of of its incoming $v_S$ neighbors must be able to process flow. Therefore, this instance of directed \minmp can be seen as the problem of purchasing as few of the $v_S$ vertices so that each $u_W$ vertex has one (or more) incoming $v_S$ vertex. This provides a direct one-to-one mapping between solutions to our constructed instance and the initial \textsc{Set Cover} instance, and the values of the solutions are conserved by the mapping. Therefore, we have an approximation-preserving reduction between the two problems, and directed \minmp acquires the known $(1-\epsilon)\ln n$ inapproximability of \textsc{Set Cover}, summarized in the following result:

\begin{theorem}
For every $\epsilon > 0$, it is \nph to approximate directed \minmp to within a factor of $(1-\epsilon)\ln n$.
\end{theorem}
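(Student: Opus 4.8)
The plan is to give an approximation-preserving reduction from \textsc{Set Cover} to directed \minmp, using exactly the gadget depicted in \Cref{fig:coverhardness}. Given a \textsc{Set Cover} instance with set system $\mathcal{S}$ over universe $\mathcal{U}$, build the directed graph with a vertex $v_S$ for each $S \in \mathcal{S}$, a vertex $w_u$ for each $u \in \mathcal{U}$, a source $s$, and a sink $t$; add a capacity-$n$ (equivalently, infinite-capacity) arc $s \to v_S$ for each $S$, a capacity-$1$ arc $v_S \to w_u$ whenever $u \in S$, and a capacity-$1$ arc $w_u \to t$ for each $u$. Only the $v_S$ vertices carry purchasable processing potential, each of capacity $n$ and cost $1$; all other vertices have zero processing potential ($C(w_u) = C(s) = C(t) = 0$), so that no flow can ever be processed anywhere except at a purchased $v_S$. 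The single demand is $(s,t,|\mathcal{U}|)$.

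The core of the argument is a value-preserving correspondence between set covers and feasible purchase sets. Since $t$'s only incoming arcs are the capacity-$1$ arcs $w_u \to t$, routing $|\mathcal{U}|$ units of \emph{processed} flow into $t$ forces exactly one unit of processed flow to arrive at each $w_u$; because $C(w_u) = 0$, that unit must already be fully processed upon entering $w_u$, hence it must have been processed at some upstream $v_S$ with $u \in S$, and that $v_S$ must have been purchased — so the purchased sets cover $\mathcal{U}$. Conversely, given a set cover $\mathcal{C}$, purchase $T = \{v_S : S \in \mathcal{C}\}$ and, for each $u$, pick some $S \in \mathcal{C}$ with $u \in S$, route one unit $s \to v_S$, process it at $v_S$, and forward it along $v_S \to w_u \to t$. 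Each $v_S$ then processes at most $|\{u : u \in S\}| \le n$ units (within capacity), each arc $v_S \to w_u$ and $w_u \to t$ carries at most one unit, and the $s \to v_S$ arcs (capacity $n$ or $\infty$) are never a bottleneck; thus $T$ is feasible at cost $|\mathcal{C}|$. Combining the two directions, the optimum of the constructed \minmp instance equals that of the \textsc{Set Cover} instance, and any polynomial-time $\alpha$-approximation for directed \minmp yields an $\alpha$-approximate set cover.

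Finally, note that the constructed graph has $n = |\mathcal{S}| + |\mathcal{U}| + 2$ vertices, polynomial in the \textsc{Set Cover} input; in the hard instances of \cite{DinurSteurer14} one has $|\mathcal{S}| = \poly(|\mathcal{U}|)$, so $\ln n = (1+o(1))\ln |\mathcal{U}|$ and the $(1-\epsilon)\ln|\mathcal{U}|$ inapproximability of \textsc{Set Cover} transfers to a $(1-\epsilon')\ln n$ inapproximability of directed \minmp for a suitably adjusted $\epsilon'$ (absorbing the lower-order terms, or by a standard padding argument). I expect the only delicate point to be exactly this bookkeeping — phrasing the logarithmic factor against the node count $n$ of the produced instance rather than the universe size — together with checking that feasibility is genuinely \emph{equivalent} to (not merely implied by) the covering condition; the capacity-$1$ bottleneck arcs into $t$ are precisely what make the forward direction tight.
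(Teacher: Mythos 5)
Your reduction is exactly the one the paper uses (the same gadget as in \Cref{fig:coverhardness}: source, set-vertices with cost-$1$ processing, element-vertices, capacity-$1$ bottleneck arcs into $t$, demand $|\mathcal{U}|$), and your two-directional value-preserving correspondence matches the paper's argument. The only difference is that you spell out the bookkeeping relating $\ln n$ of the constructed instance to $\ln|\mathcal{U}|$, which the paper leaves implicit; this is a fine addition but not a different approach.
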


Note that this construction provides the same hardness even when all demands are only to be satisfied up to a $(1-\delta)$ fraction, showing the asymptotic tightness of the approximation factor in \Cref{thm:directedminalg}.\vspace{-.05in}

\subsection{Approximation Hardness for Undirected \minmp}

We now show an approximation preserving reduction from \textsc{Min Vertex Cover} to undirected \minmp, proving that the latter problem is \ugch to approximate within a factor of $2-\epsilon$ for any $\epsilon>0$ ~\cite{khot2008vertex}, and \nph to approximate within a factor of $1.36$~\cite{dinur2005hardness}. \\

The construction is simple. Given a \textsc{Vertex Cover} instance with graph $G = (V,E)$, we create an identical graph with each vertex $v$ demanding one unit of processed flow from each of its neighbors, and each edge's capacity is $2$. Further, each vertex has $n$ units of processing potential, at a cost of $1$. Because the total demand equals the sum of all edge capacities, each unit of flow sent must use exactly one unit of edge capacity, i.e. all flow paths have length exactly one. Thus, the set of solutions exactly corresponds to vertex covers, with one unit of flow going each way across each edge, from source to sink and either to or from its point of processing. The unit costs ensure that the objective value equals the number of vertices picked, and thus that the optimal solution to this undirected \minmp instance equals that of the original \textsc{Min Vertex Cover}. The conclusion, summarized below, follows.

\begin{theorem}
Approximating undirected \minmp is at least as hard as approximating \textsc{Min Vertex Cover}. In particular, it is \nph to approximate within a factor of $1.36$ and \ugch to approximate within a factor of $2-\epsilon$, for any $\epsilon > 0$.
\end{theorem}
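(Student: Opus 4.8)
The plan is to give an approximation-preserving reduction from \textsc{Min Vertex Cover} to undirected \minmp, using the facts cited in the excerpt that \textsc{Min Vertex Cover} is \nph to approximate within $1.36$~\cite{dinur2005hardness} and \ugch to approximate within $2-\epsilon$~\cite{khot2008vertex}. Given a \textsc{Vertex Cover} instance $G=(V,E)$ with $n=|V|$ (and no self-loops), I build the undirected \minmp instance on the same graph $G$: every edge gets capacity $2$; every vertex $v$ gets purchase cost $q_v=1$ and a processing potential $C(v)$ large enough never to bind (taking $C(v)=2n$ suffices, since $\deg(v)\le n-1$); and for every ordered pair $(u,v)$ of adjacent vertices I add the demand ``$v$ requires one unit of processed flow from $u$.'' There are thus $2|E|$ unit demands, and the total demand equals $\sum_{e\in E}B(e)=2|E|$.

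The core of the proof is to show that feasible purchase sets are exactly the vertex covers of $G$, with matching cost. For the forward direction, given a vertex cover $T$, route the demand $(u,v)$ along the single edge $\{u,v\}$ and have that unit processed at whichever endpoint of $\{u,v\}$ lies in $T$ (at least one does). Then each edge carries exactly two units, one per direction, meeting its capacity, and each purchased vertex processes at most $2\deg(v)\le 2n$ units; so purchasing $T$ is feasible at cost $|T|$. For the reverse direction, take any feasible solution with purchased set $T$. Decompose its flow into amounts $a_\pi>0$ on source--sink \twalks $\pi$; since all demands are met, $\sum_\pi a_\pi = 2|E|$, while the edge-capacity bookkeeping gives $\sum_\pi a_\pi\cdot(\#\text{edge-traversals of }\pi)\le\sum_{e}B(e)=2|E|$. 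As every $\pi$ has at least one edge-traversal, both inequalities must be tight, forcing every flow-carrying \twalk to traverse exactly one edge. A length-one \twalk realizing demand $(u,v)$ visits only $u$ and $v$, so that unit's processing occurs at $u$ or $v$, forcing $u\in T$ or $v\in T$; since every demand is met, $T$ hits every edge, i.e.\ $T$ is a vertex cover, and its cost is $\sum_{v\in T}q_v=|T|$.

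Combining the two directions, the optimum of the \minmp instance equals the optimum of the \textsc{Vertex Cover} instance, and, more strongly, any \minmp solution of cost $c$ yields a vertex cover of size $c$ (and conversely) in polynomial time. Hence a $\rho$-approximation for undirected \minmp gives a $\rho$-approximation for \textsc{Min Vertex Cover}, and the claimed $1.36$ (\nph) and $2-\epsilon$ (\ugch) inapproximability bounds transfer verbatim.

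I expect the main obstacle to be the length-one argument: I need to confirm it correctly rules out the \twalks native to this model --- a \twalk using some edge twice contributes $2$ to the capacity tally and so is excluded by the same averaging computation --- and I must double-check that the model actually permits a unit of flow to be processed at the \emph{endpoint} of its walk, which it does in the \twalk-based LP (there the processing vertex $v$ ranges over all $v\in\pi$, including $s_i$ and $t_i$); otherwise a length-one walk could never be processed and the reduction would fail. The remaining pieces (capacity and cost accounting, the polynomial-time map between solutions) are routine.
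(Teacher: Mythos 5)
Your reduction is exactly the paper's: the same graph with capacity-$2$ edges, unit vertex costs, ample processing potential, and one unit of demand between every ordered pair of adjacent vertices, with the same key observation that total demand equals total edge capacity, forcing all flow walks to have length one so that purchased sets are precisely vertex covers. Your write-up just fills in details the paper leaves implicit (the flow-decomposition averaging argument and the check that the \twalk-based LP allows processing at a walk's endpoints), so the proposal is correct and essentially identical in approach.
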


\subsection{Approximation Algorithm for Directed \maxmp}
\label{dirmaxalg}
The algorithm here proceeds similarly to that in \Cref{dirminalg}. The LPs we use are the natural maximization variant of those used for the minimization problem, with the added restriction that we only use a $1/2$ fraction of the budget. It is easy to see that this additional restriction does not reduce the objective value of the optimal LP solution by more than an $1/2$-fraction. We also assume (without loss of generality) that no vertex has cost greater than the budget. The LPs are formulated as follows:\newpage

\noindent{}
\begin{minipage}{0.45\textwidth}
\textit{\Twalk-based formulation:}
\footnotesize
  \begin{subequations}
\begin{align*}
&\textsc{maximize } \sum_{i=1}^{|D|} \sum_{\pi \in P} p_{i, \pi} \\
&\textsc{subject to} \\
&\sum_{v \in V} c_v x_v \leq k/2\\
&x_v \leq 1 &\forall v \in V\\
& p_{i,\pi} = \sum \limits_{v\in \pi} p_{i,\pi}^v& \hspace*{-2in} \forall i \in [|D|], \pi \in P \\
& \sum_{\pi \in P}p_{i,\pi} \geq R_i &\hspace*{-1in}\forall i\in[|D|]\\
& \sum\limits_{i=1}^{|D|}\sum \limits_{\stackrel{\pi\in P}{\pi \ni e}} p_{i,\pi} \leq B(e) \hspace*{-2in}& \forall e \in E\\
&\sum\limits_{i=1}^{|D|} \sum \limits_{\pi\in P} p_{i, \pi}^v \leq C(v) x_v \hspace*{-2in}&\hspace*{-1in}\forall v \in V \\
& \sum\limits_{i=1}^{|D|}\sum\limits_{\stackrel{\pi\in P}{\pi \ni e}} p_{i,\pi}^v \leq B(e) x_v \hspace*{-2in}&\forall e \in E, v \in V\\
& \sum\limits_{\pi\in P} p_{i,\pi}^v \leq R_i x_v &\hspace*{-2in}\forall i\in[|D|], v \in V,\\
&p_{i,\pi}^v \geq 0 & \hspace*{-1in} \forall i\in[|D|],\pi\in P, v \in \pi\\
&0 \leq x_v \leq 1 &\forall v \in V
\end{align*}
\end{subequations}
\normalsize
\end{minipage}

\textit{Edge-based formulation:}
\footnotesize
\begin{subequations}
\begin{align*}
&\textsc{maximize }\sum_{v \in V}\sum\limits_{i=1}^{|D|}\sum \limits_{e\in \delta^-(v)} f_i^{2,v}(e)  \\
&\textsc{Subject to}\\
&\sum_{v \in V} c_v x_v \leq k/2\\
&\sum\limits_{e \in \delta^-(u)}  f_i^{j,v}(e)=  \sum\limits_{e \in \delta^+(u)} f_i^{j,v}(e) \hspace*{-2.5in}
&\\
& & \hspace*{-2in}\forall i \in [|D|], j \in \{1,2\}, v \in V, \forall u \in V \setminus \{s_i,t_i,v\} \\
&\sum\limits_{e \in \delta^-(v)}  f_i^{2,v}(e)=  \sum\limits_{e \in \delta^+(v)} f_i^{1,v}(e) \hspace*{-2in}
&\forall i \in [|D|], v \in V,\\
&\sum\limits_{v \in V} \sum\limits_{e \in \delta^+(s_i)} f_i^{2,v}(e) \geq R_i &\forall i \in [|D|]\\
&\sum\limits_{i=1}^{|D|} \sum\limits_{v \in V} (f_i^{1,v}(e) + f_i^{2,v}(e))\leq B(e) &\forall e \in E\\
&\sum\limits_{i=1}^{|D|}  \sum\limits_{e \in \delta^-(v)}  f_i^{2,v}(e) \leq C(v)x_v &\hspace*{-2in}\forall v \in V\\
&\sum\limits_{i=1}^{|D|} (f_i^{1,v}(e) + f_i^{2,v}(e)) \leq B(e) x_v &\hspace*{-2in}\forall e \in E, v \in V\\
&\sum\limits_{e \in \delta^+(s_i)} f_i^{2,v}(e) \leq R_i x_v \hspace*{-2in}&\hspace*{-2in}\forall i \in [|D|], v \in V\\
&f_i^{2,v}(e)= 0&\hspace*{-2in}\forall i \in [|D|], v \in V, e \in \delta^-(s_i) \\
&f_i^{1,v}(e)= 0&\hspace*{-2in}\forall i \in [|D|], v \in V, e \in \delta^+(t_i) \\
&p_i^{1,v}(e), p_i^{2,v}(e), x_v \geq 0&\hspace*{-2in}\forall i \in [|D|], v \in V, e \in E \\
&0 \leq x_v \leq 1 &\hspace*{-2in}\forall v \in V \\
\end{align*}
\end{subequations}

\normalsize
If purchasing a single vertex allows us to route a $1 / (2 \ln n)$ fraction of the objective value of the above LP, we purchase only this vertex. Otherwise, we can remove the potential for processing at each vertex $v$ with $c_v \geq k / \ln n$ and re-solve the LP to get a solution with objective value at least half as large as before. Thus, from now on we can assume that no $c_v$ exceeds $k / \ln n$ and therefore that the optimal LP solution puts support on at least a $1 / \ln n$ fraction of the $x_v$s (at a cost of $2$ in our approximation factor). We will call the objective value of this modified linear program $\mathrm{OPT}_\mathrm{LP'}$.

Again, we pick the vertices on which to install processing capacity on by randomized rounding: each vertex $v$ is picked with probability $x_v$. If $x_v$ is picked, then all flows processed by $v$ are rounded so that $\hat{F}_i^{j,v}(e) = f_{i}^{j,v}(e)/(4 x_v \ln n)$ for all $i \in [|D|], j \in \{1,2\}, e \in E$. If $v$ is not picked, then all flows processed by $v$ are set to zero, i.e. $\hat{F}_i^{j,v}(e) = 0$.

By design, $E[\hat{F}_i^{j,v}(e) ] = f_{i}^{j,v}(e) /(4 \ln n)$ and thus the total amount of flow processed, $P$, satisfies $E[P] = E\left[\sum\limits_{v \in V}\sum\limits_{i=1}^{|D|}\sum \limits_{e\in \delta^-(v)} \hat{F}_i^{2,v}(e)\right] = \mathrm{OPT}_\mathrm{LP'} / (4 \ln n)$.
In the solution produced by the rounding algorithm, the total flow through edge $e$ is  
$\displaystyle\sum_{v \in V} \sum_{i=1}^{|D|} ((\hat{F}_i^{1,v}(e) + \hat{F}_i^{2,v}(e))$.
This sum of random variables is $\hat{B}(e) = B(e)/(4 \ln n)$ in expectation. Letting $g(e)$ denote the flow along edge $e$, standard bounds give
\footnotesize
\begin{lemma}
\label{lem:maxround}
\begin{align}
Pr\left[g(e) \geq (4 \lg n) \cdot \hat{B}(e)\right] &\leq e^{-4 \ln n} = n^{-4} &\forall e \in E \\
Pr\left[P \leq (1/4) \cdot (1/(4\lg n) \cdot \mathrm{OPT}_\mathrm{LP'}) \right] &\leq e^{-4 \ln n} = n^{-4} &\forall e \in E
\end{align}
\end{lemma}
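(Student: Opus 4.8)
The plan is to establish both tail bounds via the standard multiplicative Chernoff bound, exploiting the fact that each of the two quantities in question is a sum of independent random variables --- one per vertex $v$ --- each of which is bounded and scaled so that the expectation comes out to exactly the claimed value. First I would verify the independence structure: the randomized rounding picks each vertex $v$ independently with probability $x_v$, and all the rescaled flow variables $\hat F_i^{j,v}(e)$ depend only on the coin flip for $v$. Hence both $g(e) = \sum_{v}\sum_i (\hat F_i^{1,v}(e)+\hat F_i^{2,v}(e))$ and $P = \sum_v \sum_i \sum_{e\in\delta^-(v)}\hat F_i^{2,v}(e)$ are sums over $v$ of independent (non-negative) random variables.

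Next I would pin down the range of each summand so the Chernoff bound applies cleanly. For the edge bound: conditioned on $v$ being picked, the LP constraint $\sum_i (f_i^{1,v}(e)+f_i^{2,v}(e)) \le B(e)x_v$ gives $\sum_i(\hat F_i^{1,v}(e)+\hat F_i^{2,v}(e)) = \frac{1}{4x_v\ln n}\sum_i(f_i^{1,v}(e)+f_i^{2,v}(e)) \le \frac{B(e)}{4\ln n} = \hat B(e)$, so each summand lies in $[0,\hat B(e)]$. Normalizing by $\hat B(e)$ puts us in the $[0,1]$ regime with mean $\mu = \mathbb{E}[g(e)]/\hat B(e) \le 1$ (since $\mathbb{E}[g(e)] = \sum_i(f_i^{1,v}(e)+f_i^{2,v}(e))$ summed over $v$, which the LP bounds by $B(e)$, hence $\mathbb{E}[g(e)]\le \hat B(e)\cdot(4\ln n)/(4\ln n)$... more precisely $\mathbb{E}[g(e)] \le B(e)/(4\ln n) = \hat B(e)$, using the original capacity constraint (2e)-analogue scaled down). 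Then the upper-tail Chernoff bound $\Pr[X \ge (1+\delta)\mu] \le e^{-\mu\delta^2/3}$ in the form $\Pr[X \ge a] \le e^{-a/3}$ for $a \ge 6\mu$ (or directly $\Pr[X\ge (4\ln n)\hat B(e)] \le e^{-4\ln n}$ using that the threshold is $\ge e\cdot\mathbb{E}[X]$ and applying the bound $\Pr[X\ge R] \le 2^{-R}$ for $R \ge 2e\,\mathbb{E}[X]$, or the cleaner $\Pr[X \ge (1+\delta)\mu]\le (e^\delta/(1+\delta)^{1+\delta})^\mu$) yields the claimed $n^{-4}$. For the flow bound: $P = \sum_v P_v$ with $P_v \le R_{\max}$-type bound available from $\sum_{e\in\delta^+(s_i)}\hat F_i^{2,v}(e) \le R_i/(4\ln n)$; after normalization this is again a sum of independent $[0,1]$-scaled variables with mean $\mathrm{OPT}_{\mathrm{LP}'}/(4\ln n)$, and the lower-tail bound $\Pr[X \le (1-\delta)\mu] \le e^{-\mu\delta^2/2}$ with $\delta = 3/4$ gives $\Pr[P \le \tfrac14\mathbb{E}[P]] \le e^{-\mu\cdot(9/16)/2}$; here I would use that the support condition (no $c_v > k/\ln n$, so at least a $1/\ln n$ fraction of the $x_v$ mass, making $\mu \ge$ something like $\Omega(\ln n)$ after accounting for the $1/(4\ln n)$ scaling of $\mathrm{OPT}_{\mathrm{LP}'}$) makes the exponent $\le -4\ln n$.

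The main obstacle I expect is the second bound: getting the exponent to be exactly $4\ln n$ requires that $\mathbb{E}[P] = \mathrm{OPT}_{\mathrm{LP}'}/(4\ln n)$ is itself $\Omega(\ln n)$ in the relevant units --- i.e., that the number of "effective" independent trials is large enough. This is where the preprocessing step matters: having removed all vertices with $c_v \ge k/\ln n$ and assumed that no single vertex alone routes a $1/(2\ln n)$ fraction of $\mathrm{OPT}_{\mathrm{LP}'}$, one knows the LP support is spread across many vertices, so each individual $P_v$ is small relative to $\mathbb{E}[P]$, and the bounded-differences/Chernoff exponent scales correctly. I would need to state carefully the normalization that makes each summand at most (a constant times) $\mathbb{E}[P]/\ln n$, so that the Chernoff exponent is $\Theta(\ln n)$ with the right constant; the edge bound is comparatively routine once the $[0,\hat B(e)]$ range is established. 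Throughout, I would cite the standard multiplicative Chernoff--Hoeffding inequalities rather than re-deriving them.
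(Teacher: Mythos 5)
Your proposal takes essentially the same route as the paper, which proves this lemma by simply invoking standard Chernoff--Hoeffding bounds over the independent per-vertex rounding decisions, exactly as you do, with the LP constraints $\sum_i(f_i^{1,v}(e)+f_i^{2,v}(e))\le B(e)x_v$ supplying the $[0,\hat B(e)]$ range for the edge tail. For the lower tail on $P$, the per-summand bound you flag as the main obstacle is obtained most cleanly by observing that, conditioned on $v$ being picked, the scaled flows $f_i^{j,v}(e)/x_v$ form a feasible solution that purchases $v$ alone, so the earlier preprocessing step (no single vertex routes a $1/(2\ln n)$ fraction of the LP optimum, and $\mathrm{OPT}_\mathrm{LP'}\ge \mathrm{OPT}_\mathrm{LP}/2$) gives $P_v\le \mathrm{OPT}_\mathrm{LP'}/(4\ln^2 n)=E[P]/\ln n$ when picked, after which the multiplicative Chernoff bound yields an $n^{-c}$ failure probability (the precise constant $4$ being immaterial to the union bound), matching the paper's intent.
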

\normalsize
so by the union bound, with probability higher than $1 - 1/n$ every edge is assigned $\leq B(e)$ total flow and the amount of flow processed and routed is within a $1/16 \ln n$ factor of $\mathrm{OPT}_\mathrm{LP'}$. \\

Finally, by Markov's inequality, the original budget constraint is satisfied with probability at least $1/2$. Combining this with lemma \Cref{lem:maxround}, the algorithm fails with probability at most $1/2 + 1/n$. Repeating the algorithm $O(\log n)$ times and taking the best feasible solution therefore provides an $\Omega(1/\log n)$ approximation with probability at least $1 - 1/\poly(n)$. This can be summarized in the following result:

\begin{theorem}
For directed \maxmp, there is a polynomial-time randomized algorithm producing an $\Omega(1/\log(n))$ approximation.
\end{theorem}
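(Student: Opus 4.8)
The plan is to follow the template of \Cref{thm:directedminalg}: relax to a linear program, round the vertex-purchase variables $x_v$ randomly, and use concentration to argue that the rounded solution is simultaneously near-budget-feasible, near-capacity-feasible, and routes a large fraction of the LP optimum. First I would use the edge-based maximization LP displayed above, in which the flow variables $f_i^{1,v}(e)$ and $f_i^{2,v}(e)$ track the commodity-$i$ flow that has, respectively will be, processed at $v$, and $x_v\in[0,1]$ indicates whether $v$ is purchased. To make the eventual budget violation controllable I would only allow the LP to spend $k/2$; since halving the budget loses at most a factor of $2$ in the optimum (scale any integral solution's chosen set down appropriately), this costs only a constant in the approximation ratio, and we may also assume no single vertex costs more than $k$.

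The key preprocessing step is to ensure the fractional solution spreads its mass over enough vertices. If some single vertex already supports routing a $1/(2\ln n)$ fraction of the LP objective, I would simply buy that one vertex and stop. Otherwise every vertex $v$ with $c_v \ge k/\ln n$ can be forbidden from processing and the LP re-solved, losing at most another factor of $2$ because no near-optimal solution could afford more than $\ln n$ such expensive vertices; after this step one may assume $c_v \le k/\ln n$ for all $v$, hence the optimum $\mathrm{OPT}_{\mathrm{LP}'}$ of the modified program puts support $\sum_v x_v \ge 1/\ln n$. Now perform the randomized rounding: include $v$ with probability $x_v$, and when $v$ is included rescale all flow it processes by $1/(4 x_v \ln n)$, zeroing it out otherwise. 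By construction each rounded flow variable has expectation $f_i^{j,v}(e)/(4\ln n)$, so the processed throughput $P$ has expectation $\mathrm{OPT}_{\mathrm{LP}'}/(4\ln n)$ and the flow on each edge $e$ has expectation $B(e)/(4\ln n)$.

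Next I would apply \Cref{lem:maxround}: because the per-edge flow is a sum of independent contributions (one per vertex), each bounded by $B(e)/(4\ln n)$ (exactly what the constraints $\sum_i (f_i^{1,v}(e)+f_i^{2,v}(e)) \le B(e) x_v$ guarantee after the rescaling), a Chernoff bound gives $g(e) \le B(e)$ for every edge except with probability $\le n^{-4}\cdot|E|$, and similarly $P \ge \mathrm{OPT}_{\mathrm{LP}'}/(16\ln n)$ except with probability $n^{-4}$; a union bound over $O(n^2)$ events yields both conclusions with probability $\ge 1 - 1/n$. The vertex-processing constraints hold deterministically for the purchased vertices, since the LP caps $v$'s processing at $C(v) x_v$, which rescales to $C(v)/(4\ln n) \le C(v)$. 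The only remaining issue is the budget, which is respected only in expectation (it is $\le k/2$ in expectation), so by Markov's inequality the realized cost is at most $k$ with probability $\ge 1/2$. Hence a single run succeeds with constant probability, and $O(\log n)$ independent repetitions (keeping the best feasible outcome) drive the failure probability below $1/\poly(n)$ while still routing an $\Omega(1/\log n)$ fraction of $\mathrm{OPT}_{\mathrm{LP}} \ge \mathrm{OPT}$.

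I expect the main obstacle to be the preprocessing/case-analysis that forces $\sum_v x_v$ to be large: without it the rescaling factor $1/(4 x_v\ln n)$ could blow up an individual vertex's contribution past $B(e)$, breaking the bounded-independent-summands hypothesis of the Chernoff step, so the argument ``either one vertex is already good, or all vertex costs are small'' has to be set up carefully and its two constant-factor losses tracked through. The concentration bounds, the deterministic handling of the processing constraints, and the Markov argument for the budget are then routine.
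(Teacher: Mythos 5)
Your proposal matches the paper's own proof essentially step for step: the same halved-budget LP, the same case split (either one vertex routes a $1/(2\ln n)$ fraction, or expensive vertices with $c_v \ge k/\ln n$ are removed at a constant-factor loss), the same randomized rounding with the $1/(4x_v\ln n)$ rescaling, the same Chernoff/union-bound argument for edge capacities and throughput, Markov for the budget, and $O(\log n)$ repetitions. No substantive differences to report.
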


We can also apply this algorithm to undirected instances by adding additional constraints the as we did in \Cref{dirminalg}, with the analysis carrying through as before. Thus, we attain the following:

\begin{theorem}
For undirected \maxmp, there is a polynomial-time randomized algorithm producing an $\Omega(1/\log(n))$ approximation.
\end{theorem}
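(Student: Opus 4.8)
The plan is to run the directed \maxmp algorithm of \Cref{dirmaxalg} almost verbatim, after encoding each undirected edge in the LP exactly as was done for \minmp in \Cref{dirminalg}. For an undirected edge $e = \{u,v\}$ I would introduce the two oriented arcs $(u,v)$ and $(v,u)$, give each its own flow variables $f_i^{1,w}(\cdot), f_i^{2,w}(\cdot)$, and impose the same capacity-$B(e)$ and per-vertex $B(e)x_w$ constraints that the directed program places on arcs. To stop the model from using both orientations ``for free'' I would add, for every undirected edge, the coupling constraint $\sum_{i}\sum_{w\in V}\big(f_i^{1,w}(u,v)+f_i^{2,w}(u,v)+f_i^{1,w}(v,u)+f_i^{2,w}(v,u)\big)\le B(e)$ together with its per-vertex refinement $\sum_{i}\big(f_i^{1,w}(u,v)+f_i^{2,w}(u,v)+f_i^{1,w}(v,u)+f_i^{2,w}(v,u)\big)\le B(e)x_w$ for each $w$. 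A decomposition argument analogous to \Cref{thm:flowmax} shows that feasible solutions of this LP correspond, with the same objective value, to feasible fractional routings in the undirected graph, so $\mathrm{OPT}_{\mathrm{LP}}$ is unchanged by the encoding, and we may again restrict to half the budget at a factor-$2$ loss.

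Next I would perform the identical preprocessing: if a single vertex already supports a $1/(2\ln n)$ fraction of the LP objective, buy it and stop; otherwise delete processing potential at every vertex with $c_v \ge k/\ln n$ and re-solve, losing only constant factors, so that the optimal solution (call its value $\mathrm{OPT}_{\mathrm{LP}'}$) places total support at least $1/\ln n$ on the $x_v$'s. Then run the same randomized rounding: purchase $v$ independently with probability $x_v$ and, on success, set $\hat F_i^{j,w}(\cdot) = f_i^{j,w}(\cdot)/(4x_w\ln n)$, so that $\mathbb{E}[\hat F] = f/(4\ln n)$, the expected processed flow is $\mathrm{OPT}_{\mathrm{LP}'}/(4\ln n)$, and the rounded load on any arc is a sum of independent per-vertex contributions.

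The only step that even looks different is the concentration step, and the point is that it is not actually different. The new coupling quantity $\sum_{w}\sum_i(\text{both arcs of }e)$ is, once more, a sum of independent random variables indexed by the purchased vertices, with expectation $\hat B(e) = B(e)/(4\ln n)$, and with each summand bounded by $\hat B(e)$ by the per-vertex refinement above — exactly the shape to which \Cref{lem:maxround} applies. Hence with probability $\ge 1 - n^{-4}$ each coupling constraint is exceeded by at most a $4\lg n$ factor, just like the ordinary arc-capacity constraints, so scaling every $\hat F$ down by $4\lg n$ simultaneously restores feasibility of all arc capacities, all undirected-edge capacities, and all vertex processing capacities, while still routing an $\Omega(1/\log n)$ fraction of $\mathrm{OPT}_{\mathrm{LP}'}$ in expectation. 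A union bound over the $O(|E| \cdot |V|)$ constraints, Markov's inequality for the $k/2$ budget as in \Cref{dirmaxalg}, and $O(\log n)$ independent repetitions taking the best feasible outcome then give the claimed $\Omega(1/\log n)$ approximation with high probability.

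The main obstacle is mild: it is simply to check that inserting the coupling constraints preserves the two properties the rounding relies on — that the LP still exactly models undirected routing (so no objective value is lost in the encoding) and that every constraint remains a sum of bounded, independent, per-vertex terms (so the Chernoff--Hoeffding bounds of \Cref{lem:maxround} transfer without modification). Both follow from the observation that the coupling constraint is additive over vertices in precisely the same way as the directed edge constraints, which is why the entire analysis of \Cref{dirmaxalg} carries through line by line.
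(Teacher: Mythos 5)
Your proposal is correct and matches the paper's argument: the paper likewise handles undirected \maxmp by encoding each undirected edge as two oppositely-oriented arcs with an added coupling constraint bounding their total flow by $B(e)$ (exactly as in \Cref{dirminalg}), and then observes that the randomized-rounding and Chernoff--Hoeffding analysis of \Cref{dirmaxalg} carries through unchanged because every constraint remains a sum of independent per-vertex contributions. Your write-up simply spells out the per-vertex refinement and the union bound that the paper leaves implicit.
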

\subsection{Approximation Algorithm for Undirected \maxmp}

We now show that the undirected \maxmp admits a constant-factor approximation algorithm when restricted to a single source $s$.  Let $\mathrm{OPT}(G,k)$ denote the value of the optimal solution to an instance with graph $G$ and budget $k$. Our algorithm works by splitting the problem into both a \textit{processing step} and a \textit{routing step}. The algorithm begins by reserving a $1/2$ fraction of each edge for use in the processing step and the remaining $1/2$ fraction for use in the routing step. Calling the reserved-capacity graphs $G_{\textrm{proc}}$ and $G_{\textrm{route}}$, respectively, the algorithm proceeds as follows:

\paragraph{Processing step}

A well known fact in capacitated network design is that the maximum amount of flow routable (sans processing) from a set $S \subseteq V$ of source vertices to a single sink forms a monotone, submodular function in $S$~\cite{chakrabarty2013capacitated}. Although this problem is usually defined in the context of sources that can produce an arbitrary amount of flow (should the network support it), we can bottleneck each source $s_i$ into producing at most some $c_i$ units of flow by replacing it with a pair of vertices connected by a capacity $c_i$ edge, without changing the submodularity of the routable flow function, $f_G(S)$. For the purpose of this lemma, redefining $s$ as our ``sink'' and the set $P$ of processing nodes as our source set $S$, we immediately attain that the function $f_G(P)$ is submodular, where $P \subset V$ is the set of nodes purchased for processing. \\

Let $H$ be a copy of $G_{\mathrm{proc}}$ with all edge capacities halved. Because $f_H$ is a submodular function, the problem of using our budget to purchase a set $P \subseteq V$ of processing nodes so to maximize $f_H(P)$ is simply an instance of a monotone, submodular maximization subject to knapsack constraints. Such problems are known to admit simple $(1 - 1/e)$-approximation algorithms~\cite{sviridenko2004note}. Let $P(H,k)$ be the optimal solution to this \textit{processable flow problem} on $H$ with budget $k$ and $\mathrm{ALG}_1(H,K)$ denote the value of the solution found by our algorithm. Because $P(H,k)$ is an upper bound on $\mathrm{OPT}({H,k})$ (indeed, the former is simply an instance of the former without the need to account for post-processing routing), the $(1-1/e)$ approximation we get has value at least equal to $(1-1/e)$ times the value of $\mathrm{OPT}({H,k})$. In particular

\begin{align*}
\mathrm{ALG}_1(H,k) & \geq (1-1/e) P(H,k) \\
                     & \geq (1-1/e)\mathrm{OPT}(H,k) \\
                     & \geq (1-1/e)(1/2)\mathrm{OPT}(G_{\mathrm{proc}},k) \\
                     & \geq (1-1/e)(1/2)(1/2)\mathrm{OPT}(G,k) \\
                     & = (1 - 1/e)/4 \cdot \mathrm{OPT}(G,k)
\end{align*}

Further, because our solution only uses at most half of the capacity of any edge in $G_{\mathrm{proc}}$, we can use the remaining, unused half of the capacities to route all flow we managed to process back to $s$.

\paragraph{Routing Step}

All flow residing in $s$ after the end of the processing step is already processed, all of it can be routed directly to the sinks using the $1/2$ fraction of edge capacities we reserved for $G_{\mathrm{route}}$. Because multiplying all edge capacities by $1/2$ reduces the amount of routable flow by the same (multiplicative) amount, we can route at least $(1/2)\min(\mathrm{ALG}_1(H,k), \textsc{MaxFlow}_G(s,t))$ units of the processed flow from $s$ to $t$. As $\textsc{MaxFlow}_G(s,t)$ is a (trivial) upper bound on $\mathrm{OPT}(G,k)$, this means we can route at least $(1/2)(1-1/e)/4\mathrm{OPT}(G,k)$ units of the processed flow from $s$ to the sinks, giving a $(1-1/e)/8 > .078$ approximation algorithm.

Thus, we get the following theorem:

\begin{theorem}
For undirected \maxmp with a single source, there is a deterministic polynomial time algorithm that produces 
a solution that can route at least $(1 - 1/e)/8 \approx .078$ times the optimal objective solution.
\end{theorem}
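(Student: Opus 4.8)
The plan is to assemble the two-phase argument sketched in the preceding \textbf{Processing step} and \textbf{Routing step} paragraphs into a single chain of inequalities. The structural idea is to split the task into (i) choosing which nodes to purchase and processing as much flow as possible at them while collecting the processed flow back at $s$, and (ii) routing that already-processed flow from $s$ out to the sinks. Since each phase only needs to respect edge capacities, I would first reserve half of every edge's capacity for phase (i) and the other half for phase (ii); this costs only a factor of $2$ because scaling every capacity of an undirected instance by $1/2$ scales the optimal routable (and processable) flow by exactly $1/2$.

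For the processing phase, I would invoke the known fact that the maximum flow routable from a source set $S$ to a single sink is a monotone submodular function of $S$~\cite{chakrabarty2013capacitated}, after the standard gadget of capping each candidate source's output by splitting it into two vertices joined by a capacitated edge (here the cap for a purchasable node $v$ is its processing capacity $C(v)$). Treating $s$ as the sink and the purchasable nodes as the ground set, maximizing processable flow subject to the budget is monotone submodular maximization under a knapsack constraint, for which~\cite{sviridenko2004note} gives a deterministic $(1-1/e)$-approximation. Chaining $\mathrm{ALG}_1(H,k) \ge (1-1/e)P(H,k) \ge (1-1/e)\mathrm{OPT}(H,k) \ge (1-1/e)(1/2)\mathrm{OPT}(G_{\mathrm{proc}},k) \ge (1-1/e)/4 \cdot \mathrm{OPT}(G,k)$ accounts, in order, for the approximation ratio, the fact that $P(H,k)$ (processable flow ignoring return routing) upper-bounds $\mathrm{OPT}(H,k)$, the extra halving of $H$'s capacities that leaves room to route the processed flow back to $s$, and the initial reservation.

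For the routing phase, the flow collected at $s$ is fully processed, so it only needs to be delivered to the sinks; routing it through the reserved half-capacity graph $G_{\mathrm{route}}$ costs another factor of $1/2$, and the deliverable amount is also trivially bounded by $\textsc{MaxFlow}_G(s,t) \ge \mathrm{OPT}(G,k)$. Combining the two phases yields routed-and-processed flow of at least $(1/2)\cdot(1-1/e)/4 \cdot \mathrm{OPT}(G,k) = (1-1/e)/8 > 0.078$ times optimal, with every step — capacity splitting, the submodular-maximization subroutine, and the final max-flow computation — deterministic and polynomial time.

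The main obstacle is the bookkeeping of the constants together with justifying that each intermediate quantity genuinely upper-bounds $\mathrm{OPT}$ of the relevant instance: in particular that $P(H,k)$ (pure processable flow, with no post-processing routing constraint) dominates $\mathrm{OPT}(H,k)$, and that the extra halving used to obtain $H$ from $G_{\mathrm{proc}}$ really frees enough capacity to route all processed flow back to $s$ without re-solving anything. Checking that submodularity of $f_G$ survives the source-capping gadget is the one place requiring slight care, but it is immediate from the cited result.
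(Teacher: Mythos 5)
Your proposal is correct and follows essentially the same route as the paper: reserve half of each edge's capacity for processing and half for routing, use the source-capping gadget plus the submodularity of single-sink routable flow to reduce the processing phase to monotone submodular maximization under a knapsack constraint with the $(1-1/e)$-approximation of \cite{sviridenko2004note}, halve $H$'s capacities to leave room to return processed flow to $s$, and bound the routing phase by $\textsc{MaxFlow}_G(s,t) \geq \mathrm{OPT}(G,k)$, yielding the identical chain of inequalities and the $(1-1/e)/8$ factor. No substantive differences from the paper's argument.
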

\subsection{Approximation Hardness for Directed \maxmp}

We now prove that directed \maxmp is NP-hard to approximate to a factor better than $(1 - 1/e + \epsilon)$. To show this, we reduce from \textsc{Max k-Cover}, which is known to have the same hardness result \cite{Feige98}. \\

Given a \textsc{Max k-Cover} instance with set system $\mathcal{S}$ and universe of elements $\mathcal{U}$, we create one vertex $v_S$ for each $S \in \mathcal{S}$ and one vertex $w_u$ for each $u \in \mathcal{U}$. Further, we create one source vertex $s$ and one sink vertex $t$, where $t$ demands $|\mathcal{U}|$ units of processed flow from $s$. We add one capacity-$n$ arc from $s$ to each $v_S$, and one capacity-$1$ arc from each $w_u$ to $t$. We then add a capacity-$1$ arc from each $v_S$ to $w_u$ whenever $S \ni u$. Finally, we give each $v_S$ vertex $n$ units of processing capacity at a cost of $1$ each. The budget for the instance is $k$ -- the same as the budget for the \textsc{Max-k-Cover} instance. A diagram of the reduction is given in \Cref{fig:coverhardness}.\\

When flow is routed maximally, each $w_u$ contributes $1$ unit of flow to the total $s-t$ flow if and only if it has a neighbor $v_S$ that was chosen to be active. Otherwise, this vertex does not help contribute towards the $s-t$ flow. Thus, this instance of directed \maxmp can be seen as the problem of buying $k$ different $v_S$ vertices so to maximize the number of distinct $w_u$ vertices to which they are adjacent. Thus, there is a direct one-to-one mapping between solutions to our constructed instance and the initial \textsc{Max k-Cover} instance, and the values of the solutions are conserved by the mapping. Therefore, we have an approximation-preserving reduction between the two problems, and directed \maxmp acquires the known $(1 - 1/e + \epsilon)$ inapproximability of \textsc{Max k-Cover}.

\subsection{Approximation Hardness for Undirected \maxmp}
\newcommand{\bopt}{b_{\rm OPT}\xspace}
\newcommand{\mb}{{\sc Max Bisection}\xspace}

We show that for some fixed $\epsilon_0 > 0$, the undirected version of \maxmp is \nph to approximate within a factor of $1-\epsilon$, implying that the the problem does not admit a PTAS unless $\p = \np$. We make no attempt to maximize the value $\epsilon_0$. \\

We show this hardness by reducing from \mb on degree-3 graphs, shown to be hard to approximate within a factor of $.997$ in \cite{berman1999some}\footnote{To be precise, this paper shows the aforementioned hardness for \textsc{Max Cut}. A simple approximation preserving reduction from \textsc{Max Cut} to \mb can be derived by looking at maximum cuts of the graph formed by 2 disjoint copies of the \textsc{Max Cut} instance graph.}. Let $G = (V, E)$ be the input to the degree-3 \mb instance. For each $v_i \in V$, create two vertices, $u_i$ and $w_i$, joined by an edge with capacity $3$. We also add a capacity-$1$ edge between $u_i$ and $u_j$ whenever $v_i$ and $v_j$ are adjacent in $G$. Each $w_i$ vertex demands $3$ units of flow from every $u_j$ (including when $i = j$). Further, every $u_i$ vertex can be given $3|V|$ units of processing capacity (or, equivalently, $\infty$ units) at a cost of $1$, and the instance's budget is set to $|V|/2$. \\

The intuition behind the construction is as follows. With a budget of $|V|/2$, we can purchase exactly half of the $u_i$ vertices (and all budget is used up without loss of generality); our bisection will be between the purchased $u_i$s and the unpurchased ones. Let $b$ be the number of edges in any such bisection. Each $w_i$ adjacent to a purchased $u_i$ can have $3$ units of its demand satisfied by flow originating from and processed by $u_i$, and the only edge connecting $w_i$ to the rest of the graph ensures $w_i$ can never receive more than $3$ units of flow regardless. Thus, such $w_i$s are maximally satisfied, and contribute $3|V|/2$ units to our objective value. The remaining $w_i$s must have their processed flow routed to them via edge via the $b$ capacity-$1$ edges in the bisection (and, indeed, every edge in the bisection will carry $1$ unit of flow when routed optimally, as witnessed by the solution where each unprocessed $u_i$ receives flow on each cut-edge and routes it directly to $w_i$), so the total amount of demand satisfied by the $w_i$ adjacent to unpurchased vertices is exactly $b$, so the objective value of a solution with $b$ edges in the bisection is exactly $3|V|/2 + b$. \\

Let $\bopt$ denote the number of edges cut by the optimal bisection. It is a well-known fact that $\bopt \geq |E|/2 = 3|V|/4$. By the theorem of \cite{berman1999some} it is \nph to distinguish instances with $3|V|/2 + \bopt$ units of satisfiable demand from those with only $3|V|/2 + (1-.003)\bopt$, giving an inapproximability ratio of

\begin{align*}
\frac{3|V|/2 + (1-.003)\bopt}{3|V|/2 + \bopt} & = 1 - \frac{.003\bopt}{3|V|/2 + \bopt} \\
& = 1 - \frac{.003}{3|V|/(2\bopt) + 1} \\
& \leq 1 - \frac{.003}{3|V|/(2\cdot3|V|/4) + 1} \\
& = 1 - \frac{.003}{2+1} \\
& = .999
\end{align*}

This calculation is summarized in the following result:

\begin{theorem}
It is \nph to approximate undirected \maxmp to within a factor better than $.999$.
\end{theorem}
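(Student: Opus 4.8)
The plan is to give a gap-preserving reduction from \mb on $3$-regular graphs, which inherits the $.997$-inapproximability of \textsc{Max Cut} on bounded-degree graphs from~\cite{berman1999some} via the standard ``two disjoint copies'' trick (maximum bisections of two copies of a \textsc{Max Cut} instance correspond to maximum cuts of one copy). Given a cubic graph $G=(V,E)$, I would construct exactly the undirected \maxmp instance sketched above: split each $v_i$ into a source $u_i$ and a sink $w_i$ joined by a capacity-$3$ edge; add a capacity-$1$ edge $u_iu_j$ for each $v_iv_j\in E$; let each $w_i$ demand $3$ units from every $u_j$; let every $u_i$ be purchasable for cost $1$ with (effectively) unbounded processing capacity; and set the budget to $|V|/2$. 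The heart of the argument is then to prove that the optimum of this instance equals precisely $3|V|/2+\bopt$.

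For the \emph{completeness} direction, I would take any bisection $(A,B)$ of $V$ cutting $b$ edges, purchase exactly $\{u_i : v_i\in A\}$ (cost $|V|/2$), and for each $i\in A$ send $3$ units from $u_i$, process them at $u_i$, and push them across the capacity-$3$ edge into $w_i$; for each cut edge $v_iv_j$ with $i\in A$, $j\in B$, send one more unit from $u_i$, processed at $u_i$, along $u_i\to u_j\to w_j$. Since $G$ is cubic, at most three cut edges are incident to a given $u_j$, so the capacity-$3$ edge $u_j w_j$ is not overloaded and all capacities hold; the routed-and-processed flow is $3|V|/2+b$, so optimizing over bisections gives value $\ge 3|V|/2+\bopt$. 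The \emph{soundness} direction is the step I expect to be the main obstacle: I must show every feasible solution has value at most $3|V|/2+\bopt$. Let $A$ be the purchased set (WLOG $|A|=|V|/2$, since extra purchases never hurt) and let $(A,B)$ be the induced bisection with $b\le\bopt$ cut edges. Each $w_i$ is attached to the rest of the graph by a single capacity-$3$ edge, so it receives at most $3$ units, and the $w_i$ with $u_i\in A$ contribute at most $3|A|=3|V|/2$ in total. Every unit delivered to a $w_j$ with $u_j\in B$ is processed flow, hence was processed at some vertex of $A$; since all processing lies in $A$, all such sinks lie behind $B$-vertices, and $B$ contains no processing node, a standard cut argument bounds the total processed flow deliverable into the $B$-side by the total cut capacity $b$. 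Hence the value is at most $3|V|/2+b\le 3|V|/2+\bopt$.

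Finally I would combine this correspondence with the quantitative hardness exactly as in the displayed computation above: using $\bopt\ge|E|/2=3|V|/4$ for cubic $G$, distinguishing \mb instances with optimum $\bopt$ from those with optimum $(1-.003)\bopt$ becomes, under the reduction, the problem of distinguishing \maxmp instances of optimum $3|V|/2+\bopt$ from those of optimum $3|V|/2+(1-.003)\bopt$, whose ratio is $1-\tfrac{.003}{3|V|/(2\bopt)+1}\le 1-\tfrac{.003}{3}=.999$. The one point to handle with care in the write-up is making the ``processed flow destined for the $B$-side must cross the cut'' claim airtight despite the possibility that a flow path zig-zags across the cut several times: the clean way is to contract $A$ and $B$ and observe that net flow from $A$ to $B$ cannot exceed the summed capacity $b$ of the cut edges, and that all $B$-side demand must be fed by such net flow since no processing occurs in $B$.
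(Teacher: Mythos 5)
Your proposal is correct and follows essentially the same route as the paper: the identical gadget (split each $v_i$ into $u_i,w_i$ with a capacity-$3$ edge, capacity-$1$ edges between adjacent $u$'s, budget $|V|/2$), the same value correspondence $3|V|/2+b$ with the bisection size $b$, and the same final ratio computation via $b_{\rm OPT}\geq 3|V|/4$. Your explicit cut/conservation argument for soundness (processed flow absorbed on the unpurchased side is bounded by the $b$ units of crossing capacity, since no processing occurs there) is just a more careful write-up of what the paper asserts informally, not a different approach.
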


\section{Related Work}

\subparagraph{Network Function Optimization}
In software-defined networking, SIMPLE~\cite{SIMPLE2013} and FlowTags~\cite{FlowTags} take advantage of switches with fine-grained rule support. Both approaches focus on how to utilize the constrained TCAM size, a hardware limitation to support fine-grained policy. Neither approach attempts to solve the joint optimization of the capacity constraints for both servers and switches. Slick~\cite{slick} offers a high-level control program that specifies custom processing on precise subset of flows.  It also assumes the server processing power is heterogeneous, and uses heuristic approaches for the underlying placement, routing and steering. 

\subparagraph{Network Function Consolidation}
CoMB~\cite{COMB2012} and Click~\cite{clickos} both consolidate network functions into applications or a VM images, and consider server machines that can each run multiple instances of different network functions. Both focus on improving the performance on single nodes, and treat network functions homogeneously. Neither covers a network-wide optimization.

\subparagraph{Network Function Migration and Reroute}
OpenNF~\cite{OpenNF} and Split-Merge~\cite{SplitMerge} leverage the SDN controller to manage the network function's state migration and the network function's flow migration. Both focus on reallocating resources and rerouting flows when either a node or a link is over-utilized. While their solution focuses on fixing congestion when it occurs, ours focuses on figuring out how to avoid congestion in the first place.

\subparagraph{Network Function Online Request Model}
Recently, Even, Medina, and Patt-Shamir \cite{even2016competitive} studied an online request admission problem in the same multi-commodity flow with processing setting that we study. In their work, requests arrive online and specify a processing pipeline for flow between a source and sink; intermediate nodes in the pipeline may be any subset of nodes in the underlying graph. The goal is to accept as many such flow requests as possible while ensuring that accepted requests are assigned flow paths that satisfy capacity constraints. In this setting, the authors show an $O(k \log (kn))$-competitive online algorithm for instances with length-$k$ pipelines.

\subparagraph{Routing and Middlebox optimization}
A couple of recent papers consider approximation algorithms for path computation and service placement \cite{even2016approximation} and Service Chain and Virtual Network Embeddings \cite{rost2016service}. Both papers use randomized rounding of a linear programming relaxation of the problem. Both of these works differ from our paper in that packets between demand pairs are not splittable, and thus must be sent along \emph{paths} rather than \emph{flows}. Other recent papers provide approximation algorithms for variants of \minmp with no hard edge constraints~\cite{cohen2015near,lukovszki2017approximate}. In~\cite{lukovszki2017approximate}, the authors independently derive the same \textsc{Set Cover}-based hardness construction for their problem variant.

\bibliography{references}

\end{document}